\title{Persuading part of an audience}
\author{
	Bruno Salcedo%
	\thanks{
		Department of Economics, Western  University,
		{\href{http://www.brunosalcedo.com/}{{brunosalcedo.com}}},
		{\href{mailto:bsalcedo@uwo.ca}{{bsalcedo@uwo.ca}}}.
		}
	\thanks{ I am grateful to Yi Chen for introducing me to the work of \cite{lipnowski2017},
		and to the Economics faculty at Western University for their support and guidance.
		}
}
\date{\enspace\today}
\documentclass[12pt]{article}
\usepackage[english]{babel}    
\usepackage[ansinew]{inputenc} 
\usepackage[T1]{fontenc}       
\usepackage{lmodern,microtype} 
\usepackage{amsmath,amsthm,amssymb,amsfonts} 
\usepackage{dsfont,mathrsfs,ushort,xfrac,bm} 
\usepackage{titlesec,titling}  
\usepackage[nohead]{geometry}  
\usepackage{setspace,caption}  
\usepackage{enumitem,booktabs} 
\usepackage[comma]{natbib} 
\usepackage{pstricks,pst-all}   
\usepackage{pst-plot,pst-node,pst-intersect}  
\usepackage{sgame}           
\usepackage{hyperref,breakurl}
\geometry{margin=35mm}
\setenumerate{label=\small(\roman*)}
\setstretch{1.15}
\bibliographystyle{apa} 
\hypersetup{breaklinks=true,colorlinks=true,pdfnewwindow=true,pdfstartview=FitH,%
	pdftitle="Choice interdependence",pdfauthor="Bruno Salcedo",%
	urlcolor=blue!95!red!50!black,citecolor=blue!95!red!50!black,linkcolor=red!90!black}
\DeclareCaptionFont{fancy}{\bfseries\sffamily} 
\captionsetup{margin=10pt,labelformat=simple,labelsep=endash,font=small,labelfont=fancy,format=hang}

\gamemathtrue
\allowdisplaybreaks

\newcommand{\zibri}{\"{U}///}
\newcommand{\quokkito}{\"{q}}
\AtEndDocument{\vfill\centering\footnotesize\textcolor{black!14}{\quokkito \quad \zibri }}
\providecommand{\psreset}{\psset{%
		linewidth=0.3pt,linestyle=solid,linecolor=black,
		dotsize=2.5pt,dotsep=2.5pt,arrowsize=4pt,
		fillstyle=none,fillcolor=white,
		showpoints=false,arrows=-,linearc=0,framearc=0,
		hatchsep=2pt,hatchwidth=0.2pt,nodesep=4pt,opacity=1}
		\psset{gridcolor=black!60, subgridcolor=black!30}
		}

\psreset
\titleformat{\section}[block]{\centering\large\bfseries\sffamily}{\thesection.}{0.5em}{}
\titleformat{\subsection}[block]{\flushleft\bfseries}{\thesubsection.}{0.5em}{}
\titleformat{\subsection}[block]{\flushleft\bfseries\sffamily}{\thesubsection.}{0.5em}{}
\titleformat{\subsubsection}[runin]{\normalsize\itshape}{\bfseries\upshape\sffamily\thesubsubsection.}{0.5em}{}[.--\:]
\renewcommand{\thesubsubsection}{\arabic{section}.\arabic{subsection}.\alph{subsubsection}}
\titlespacing{\section}{0ex}{10ex}{5ex}
\titlespacing{\subsection}{0in}{6ex}{3ex}
\titlespacing{\subsubsection}{0mm}{2ex}{0.5em}
\pretitle{\begin{center}\LARGE\bfseries\sffamily}
\posttitle{\par\end{center}\vskip 0.5em}
\preauthor{\begin{center} \large \lineskip 0.5em\begin{tabular}[t]{c}}
\postauthor{\end{tabular}\par\end{center}}
\predate{\begin{center}\small}
\postdate{\par\end{center}}
\providecommand{\abstitle}[1]{{\par\vspace*{2ex}\small\bfseries\sffamily #1}\hspace*{1ex}}
\renewenvironment{abstract}%
	{\begin{center}\begin{minipage}{0.8\linewidth}%
		\setlength{\parindent}{0.0em}\abstitle{Abstract}\small}%
	{\end{minipage}\end{center}\vfill\clearpage}
\newtheoremstyle{defn}{3ex}{3ex}{}{}{\bfseries\sffamily}{}{.5em}%
     {{\thmname{#1}\:\thmnumber{#2}}\:\thmnote{\mdseries({\small#3})\:}}%
\theoremstyle{defn}
\newtheorem{definition}{Definition}

\newtheorem{assumption}{Assumption}

\newtheoremstyle{prop}{3ex}{3ex}{\itshape}{}{\sffamily\bfseries}{}{.5em}%
     {{\thmname{#1}\:\thmnumber{#2}}\:\thmnote{\mdseries({\small#3})\:}}
\theoremstyle{prop}
\newtheorem{proposition}{Proposition}
\newtheorem{theorem}[proposition]{Theorem}
\newtheorem{lemma}[proposition]{Lemma}

\newtheoremstyle{obs}{2ex}{2ex}{}{}{\itshape}{}{.5em}%
     {{\thmname{#1}\:\thmnumber{#2}}\:\thmnote{[{\small#3}]\:}}
\theoremstyle{obs}
\newtheorem{example}{Example}
\providecommand{\cont}{\ensuremath{\:\blacktriangledown}}
\providecommand{\blank}{{\,\cdot\,}}

\DeclareMathOperator*{\argmin}{arg\,min}
\DeclareMathOperator*{\hull}{co}

\DeclareMathOperator*{\supp}{supp}
\providecommand{\Exp}[2][\!]{\mathds{E}_{#1}\left[\,#2\,\right]}

\providecommand{\Conv}[1][]{\xrightarrow[#1]{\quad}}
\providecommand{\Char}{\mathds{1}}
\providecommand{\Real}{{\mathds{R}}}
\providecommand{\Natural}{{\mathds{N}}}

\providecommand{\as}{\ensuremath{\mathrm{a.s.}}}

\newcommand{\act}{\mathbf{a}}
\newcommand{\mess}{\mathbf{m}}
\newcommand{\att}{{\Pi}}
\newcommand{\BR}{\mathrm{BR}}

\DeclareMathOperator*{\attain}{att}
\DeclareMathOperator*{\qconv}{env_{q}}
\DeclareMathOperator*{\conv}{env}

\begin{document}
\maketitle
\begin{abstract}
	I propose a cheap-talk model in which the 
	sender can use private messages and only cares about persuading a subset of her audience. 
	For example, a candidate only needs to persuade a majority of the electorate
	in order to win an election.
	I find that senders can gain credibility
	by speaking truthfully to some receivers while lying to others. 
	In general settings, the model admits information transmission in equilibrium 
	for some prior beliefs.
	The sender can approximate her preferred outcome
	when the fraction of the audience she needs to persuade is sufficiently small. 
	I characterize the sender-optimal equilibrium and the benefit
	of not having to persuade your whole audience in separable environments.
	I also analyze different applications 
	and verify that the results are robust to some perturbations of the model,
	including non-transparent motives as in \cite{CS},
	and full commitment as in \cite{KG}.
	\abstitle{Keywords} Cheap talk  $\cdot$ Information transmission $\cdot$ Persuassion 
	\abstitle{JEL classification} D83 $\cdot$ C72 $\cdot$ D72 $\cdot$ L15
\end{abstract}

A politician running for office only needs half plus one of the votes. 
A seller with a capacity constraint only needs to persuade a certain number of consumers to purchase her product. 
A person looking for a job may apply to many positions, 
but she only has to convince a single firm to extend an offer. 
This paper studies the problem of an informed sender who can engage in private
conversation with many receivers
and cares about the behavior of \emph{some but not all} of them. 
My main finding is that having to persuade only part of an audience 
significantly facilitates information transmission and increases persuasion power.  

Let us examine the first example in more detail. 
Suppose a  politician (the sender, she) is running for office. 
All voters (the receivers, he) share the same preferences.
The unknown state of the world equals either $0$ or $1$. 
Each voter will vote for the politician 
if his expectation about the state of the world is greater than $1/2$.
The voters share a common prior expectation in the interval $(1/3,1/2)$.
Suppose that the politician learns the true state of the world,
and can engage in private cheap talk with each voter via targeted ads. 
I claim that, if there electorate is large enough, 
then there exists an equilibrium in which  
she  wins the election for sure, regardless of the state.

This is possible because 
the politician only needs to persuade half plus one of the electorate in order to win the election. 
She can do so with the following strategy. 
If the state is indeed $1$, then she will let every single voter know this fact.
If the state is $0$, then she will randomly choose half plus one of the voters
and tell them that the state is $1$, despite the fact that it is not.

A voter that receives a message saying that the state equals $1$
knows that this could be a lie. 
However, he also knows that he would be more likely to receive 
this message if it was actually true. 
Hence, the message conveys some information. 
When the population is large enough, 
it conveys sufficient information to overturn prior beliefs
arbitrarily close to $1/3$.
In that case, every voter who receives this message prefers to vote for the politician.

I study a general cheap-talk model with many ex-ante homogeneous receivers and 
both public and private communication.
I depart from the literature by assuming that there are $n$ receivers,
but the sender only cares about the highest $n_0 < n$ actions taken.
In such cases, 
the utility of the sender can be completely determined by strict subsets of the receivers.
Thus, she only needs to persuade part of her audience 
in order to maximize her utility. 
I call the gap between $n$ and $n_0$ an \emph{excess audience}.

I find that the sender can influence the behavior of receivers in equilibrium 
in a very wide class of environments as long as there is an excess audience 
(Proposition \ref{prop:IT}).
In some environments, 
effective information transmission is \emph{only}
possible if there is an excess audience (Proposition \ref{prop:nece}). 
When the fraction of the audience that the sender  cares about is small enough,
she can achieve her preferred outcome in equilibrium (Proposition \ref{prop:first-best}).

I characterize the sender's benefit from having an excess audience under a separability assumption (Theorem \ref{prop:value}). 
\cite{lipnowski2017} characterize the sender's maximum equilibrium payoff 
when the sender cares about her entire audience in terms of her value function.
The value function is the highest payoff the sender can obtain when 
all the receivers behave optimally given their posterior beliefs. 
Under Lipnowski and Ravids' assumptions, 
the sender's maximum equilibrium payoff equals
the quasiconcave envelope of her value function.
I find that an additional step is needed in the presence of an excess audience.

This step involves a generalization of the 
politician's communication strategy described above. 
The sender starts by randomly and privately splitting her audience into a \emph{target} 
audience that she wants to persuade, and the rest of the receivers. 
Receivers in the target audience always receive whichever message 
induces the behavior most favorable to the sender. 
The communication strategy for the rest of the receivers
is chosen to maximize the credibility of the message sent to the target group. 
This message conveys information because individual receivers  
are not told whether they were assigned to the target audience. 

This kind of strategy allows the sender to implant 
a fixed posterior belief in a fixed proportion of the audience
regardless of the state.
I say that such beliefs are \emph{attainable}.
The set of attainable beliefs admits a simple and computationally tractable characterization
(Lemmas \ref{lemma:attain} and \ref{lemma:ver}). 
When the sender wishes  to persuade her entire audience,  the only attainable belief is the prior. 
However, the set of attainable beliefs is strictly increasing with the size of the excess audience.  
The missing step to characterize the sender's maximum equilibrium payoff
is to replace the original value 
with the maximum value over the set of attainable beliefs. 

The benefit from having an excess audience is always non-negative.
It is strictly positive for some prior beliefs.
And it  is monotone in the fraction of the audience that the sender wishes to persuade.
Moreover, as the fraction of the audience that the sender cares about converges to zero,
the set of attainable beliefs totally covers the interior of the simplex. 
Consequently, the maximum equilibrium payoff approaches the best feasible 
payoff for the sender (Proposition \ref{prop:gap}).  

%

Section \ref{sec:app} applies the techniques 
developed in the paper to analyze the election example, financial advice, 
and the role of advertisement for crowdfunding.
Section \ref{sec:ext} considers two extensions of the model. 
First, I analyze a model with full commitment as in \cite{KG}
and the information design literature \citep{BM,taneva}.
I find a characterization of the maximum sender equilibrium payoff
in the full-commitment game, assuming that the state space is finite. 
The characterization is similar to the one for the cheap-talk game. 
The only difference is that it uses the concave envelope of the value function
instead of the quasiconcave envelope. 

I also analyze an example with the classic quadratic loss functions from \cite{CS}.
This example does not satisfy all the assumptions required for my characterization. 
However, it is still possible to use  strategies with a random target audience
in order to  transmit information.
When the fraction of the audience that the sender cares about is small, 
the sender can approximate her preferred outcome 
and transmit large amounts of information to most of her audience
in equilibrium.
Unlike the case without an excess audience, 
information transmission is possible for any degree of bias. 

Since the seminal work of Vincent Crawford and Joel Sobel,
different authors have found different mechanisms 
for an expert to gain credibility. 
Information transmission is possible via cheap talk
when incentives are not too misaligned,
or there are multiple senders \citep{battaglini2002},
or multiple dimensions of information \citep{chakraborty2010},
or strategic complementarities \citep{levy2004takes,baliga2012strategy},
or the sender has transparent motives \citep{lipnowski2017},
among other reasons.
An excess audience is a novel mechanism 
which allows for information transmission in 
some settings in which none of the aforementioned mechanisms operate. 

Some authors have studied cheap-talk communication with multiple audiences.
However, this literature has focused on situations when 
the sender cares about the actions of all receivers,
either directly or indirectly. 
\cite{farrellgibbons} showed that 
senders with multiple audiences sometimes prefer public communication and sometimes 
private communication.
\cite{mariagreg} show that the sender might be strictly better off 
by combining both types of messages. 
Hence, I allow the sender to use both private and public messages. 

\cite{suraj} study the problem of an informed sender who employs
cheap talk to try to prevent an ethnic conflict. 
Their model has a large audience, and the sender is 
allowed to use private messages. 
However, they restrict attention to strategies that are anonymous,
conditional on observed heterogeneity. 
This restriction precludes the strategies with random target audiences
that I analyze. 
Instead, they exploit preference complementarities 
in order to find an equilibrium with effective information transmission.

There is a large body of literature 
using cheap talk to study information transmission 
between politicians and electorates,
dating at least as far back as \cite{harrington1992revelation}.
Some recent work in this area includes 
\cite{schnakenberg2015expert}, 
\cite{panova2017partially}, 
\cite{jeong2019using}, 
and \cite{kartik2017informative}.
Other recent papers analyze the problem 
from the information design perspective,
including  \cite{alonso2016persuading},
and \cite{chan2019pivotal}.
Within this literature, 
the papers that assume talk is cheap focus on public messages. 
My contributions highlight 
the importance of private anonymous communication (e.g., through social media).


\section{Model}
There is one sender $s$, and a set of receivers $r \in R = \{1,\ldots,{n}\}$.
The sender and receivers share a common prior belief $\pi_0$
about the true state $\theta_0\in \Theta$.
The sender learns the state.
She then sends a private message $m_r^p\in M$ to  each receiver $r$,
and a public message $m_0 \in M_0$ to all receivers.
Each receiver observes the compound message $m_r = (m_0,m_r^p)$, 
but observes neither the state nor other receivers' private messages. 
Then, all receivers observe an uninformative public sunspot $\omega^0$ 
distributed uniformly on $[0,1]$.
Finally, each receiver $r$ chooses an action $a_r \in A$.
All receivers have identical preferences.
The utility of $r$ depends only on his own action and the state.
It is given by $u_R(a_r,\theta)$.
The sender's utility $u_S(a_1,\ldots,a_n)$ does \emph{not} depend on the state.

I impose some technical restrictions.%
\footnote{%
	I adopt the following notational conventions throughout the paper. 
	For each separable metric space $Y$, 
	$\mathscr{B}_Y$ denotes the Borel $\sigma$-algebra on $Y$,
	and 
	$\Delta{Y}$ the set of probability measures on $(Y,\mathscr{B}_Y)$
	endowed with the weak* topology. 
	Given measures $\pi,\tau\in \Delta{Y}$, 
	$\pi\ll \tau$ denotes absolute continuity, $\pi\sim \tau$ denotes equivalence,
	and $d\pi/d\tau$ denotes the Radon-Nikodym derivative of $\pi$ with respect to $\tau$.
	The support of $\pi$ is denoted by $\supp \pi$.
	Given a set $X$, a function $g:X\to \Delta Y$,
	a point $x\in X$ and an event $Y'\in\mathscr{B}_Y$, let $g(Y'|x):= [g(x)](Y')$.
	Profiles of elements of $X$ are denoted by $\mathbf{x}=(x_1,\ldots,x_n)\in X^n$.
	Throughout the paper, ``a.s.'' means ``almost surely with respect to $\pi_0$''.
} 
Each of $A$, $\Theta$ and $M$ is a compact separable metric space containing at least two elements. 
$M_0$ and $M$ are rich enough as to \emph{not} restrict the set of equilibrium outcomes.%
	\footnote{A sufficient condition is $\|M\|,\|M_0\| \geq \|(A \times \Theta \times [0,1])^n\|$.} 
The utility functions are continuous. 


I study the perfect Bayesian equilibria of this game. 
\emph{Communication strategies}, \emph{receiver strategies},
and \emph{updating rules} are measurable maps 
$\mu:\Theta\to\Delta (M_0\times M^n)$,
$\alpha_r:M_0\times M\times [0,1]\to A$,
and $\beta_r:M_0\times M \to \Delta \Theta$, respectively. 
Let $\mu_r(\blank|\theta)$ denote the marginal distribution over $m_r$ induced by $\mu(\theta)$.
Let $\BR(\pi)$ be the set of actions that maximize $\int_{\Theta}  u_R(a,\theta) \, d\pi(\theta)$.
An \emph{equilibrium} is a tuple $(\bm\alpha,\bm\beta,\mu)$ consisting
a profile of {receiver strategies},
a profile of {updating rules},
and a  {communication strategy},
such that
\begin{enumerate}
\item For every receiver $r$, $\beta_r$ is consistent with Bayes' rule given $\mu_r$ and $\pi_0$.
\item $\alpha_r(m,\omega) \in A^*(\beta_r(m_r))$ for every receiver $r$ and compound message $m_r$.
\item For every message profile $\mess\in M_0\times M^n$,
	if there exists a state $\theta$ such that $\mu(\mess|\theta)>0$,
	then $\mess$ maximizes
	$ \int_0^1\! u_S(\alpha_1(m_1,\omega),\ldots,\alpha_n(m_n,\omega)) \, d\omega $.
\end{enumerate}

\subsection{Pivotal part of the audience}
Throughout most of the paper,
I maintain the assumption that the sender only cares about the tail of the empirical 
distribution of actions taken by the receivers.

\begin{assumption}\label{ass:tail}
	$A\subseteq \Real$ and there exists an integer $n'$ such that 
	for every pair of action profiles $\act$ and $\tilde\act$,
	if $a^{(i)} = \tilde{a}^{(i)}$ for every $i\geq n + 1 - n'$,
	then $u_S(\act) = u_S(\tilde\act)$,
	where $a^{(i)}$ denotes the $i$-th order statistic of $\act$.
\end{assumption}

Define the \emph{pivotal number of receivers} to be the smallest integer $n_0  \in \{0,\ldots,n\}$
that satisfies the condition from Assumption \ref{ass:tail}. 
The \emph{pivotal fraction of the audience} is $\gamma_0 = n_0/n$.
The sender's utility only depends on the highest $n_0$ actions taken by  the receivers. 
If $n_0 = 0$, then the sender is indifferent between all outcomes. 
If $n_0 = n$, then the sender cares about the entire empirical distribution of 
receiver actions, but not about the identity of the receivers taking each action. 
If $0 < n_0 <n$
%
then the sender cares about persuading fewer receivers than she can talk to.
In that case, I say that there is an \emph{excess audience}.

\begin{example}
Recall the election example from the introduction.
Letting $a_r=1$ denote a vote for the sender and $a_r=0$ a vote against her,
the outcome of the election is determined by the median action. 
If $n$ is odd, then  $n_0 = (n+1)/2$ and $\gamma_0 = (n+1)/2n$.
See Section \ref{sec:election} for more on this example. 
\end{example}

\begin{example}
Suppose that the sender is the owner of a coffee shop from a local franchise $T$. 
The shop is located at a tourist destination with several shops from the same franchise. 
Most of the potential customers are travelers who are unfamiliar with the franchise
and will buy coffee from it at most once. 
The utility of a potential customer equals $\theta_0$ 
if they buy a coffee at one of the  coffee shops ($a_r=1$),
and $0$ otherwise ($a_r=0$).
The sender's profit is normalized to equal the number of customers  she serves. 
She can talk with $n$ passing-by receivers, but she can serve at most $n_0$ of them. 
Receivers who choose $a_r=1$ but are beyond the capacity of the sender,
will buy their coffee at a different shop from the same franchise. 
Therefore, $u_S(\act) = \sum_{i=1}^{n_0} a^{(n+1-i)}$.
\end{example}	

\section{Attainable posteriors}\label{sec:posterior}
This section discusses two technical lemmas that drive the rest of the results. 
Readers interested in the main results can skip to Section \ref{sec:main}.
A key step in my analysis is to determine
the maximum influence that the sender can exert over the beliefs of part of her audience.
If the sender wants to guarantee that there are always $n_0$ receivers having certain posterior beliefs, what values can these posteriors take? 

\begin{definition}\label{def:attainable}
	For $\gamma\in[0,1]$, a belief $\pi' \in \Delta\Theta$ is $\gamma$-\emph{attained} by
	a communication strategy $\mu$ and a profile of updating rules $\bm\beta$ if
	\begin{enumerate}
	\item For every receiver $r$, $\beta_r$ is consistent with Bayes' rule given $\mu$.
	\item For every state $\theta$ and every message profile $\mess$ in the support of $\mu(\theta)$,
			there exists a set $T \subseteq R$ 
		such that $\|T\| \geq n\gamma$ and $\beta_r(m_r) = \pi$ for all $r\in T$.
	\end{enumerate}
\end{definition}

Say that $\pi$ is $\gamma$-attainable if there exist $\mu$ and $\bm\beta$ that $\gamma$-attain it.
Let $\att(\gamma)$ be the set of $\gamma$-attainable beliefs.
The following lemma characterizes the set of  $\gamma$-attainable beliefs
using a set of linear restrictions on likelihood ratios.

\begin{lemma}\label{lemma:attain}
	For all $\gamma\in(0,1]$ and $\pi \in \Delta\Theta$, the following statements are equivalent 
	\begin{enumerate}
	\item $\pi$ is $\gamma$-attainable.
	\item $\pi_0(E) \pi(E') \geq \gamma \pi_0(E')\pi(E)$ 
		for any two events $E,E'\in\mathscr{B}_\Theta$.
	\item $\pi \ll \pi_0$ and 
		$d\pi / d\pi_0  \in [\gamma c^0,c^0]$ a.s.\ for some $c^0>0$.%
		\footnote{%
		Condition (\emph{iii}) is a stronger version of
		the condition from Theorem 2.1 in \cite{diaconis1982updating}.
		They find that a posterior belief $\pi$ is consistent with a prior $\pi_0$
		and Bayes' Rule if and only if
		$\pi \ll \pi_0$ and $d\pi / d\pi_0  \leq c^0$ a.s.\ for some $c^0>0$.
		}
	\end{enumerate}
\end{lemma}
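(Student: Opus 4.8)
The plan is to establish the three equivalences by closing the implications (iii)$\Rightarrow$(ii)$\Rightarrow$(iii), which is purely a measure-theoretic statement about the likelihood ratio, together with (iii)$\Rightarrow$(i), an explicit construction of an attaining strategy, and (i)$\Rightarrow$(iii), a necessity argument from Bayes' rule and counting. I expect the construction in (iii)$\Rightarrow$(i) to be the main obstacle.

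First I would dispatch the equivalence of (ii) and (iii), which involves no game theory. For (iii)$\Rightarrow$(ii), write $\pi(E')=\int_{E'}(d\pi/d\pi_0)\,d\pi_0$, bound the integrand below by $\gamma c^0$ on $E'$ and above by $c^0$ on $E$, and observe that both sides of the claimed inequality then collapse to $\gamma c^0\,\pi_0(E)\pi_0(E')$. For (ii)$\Rightarrow$(iii), apply (ii) in the swapped form $\pi_0(E')\pi(E)\geq\gamma\pi_0(E)\pi(E')$ with $E=\Theta$ and $\pi_0(E')=0$ to conclude $\pi(E')=0$, hence $\pi\ll\pi_0$ and $f:=d\pi/d\pi_0$ exists. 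Then, writing $m=\operatorname{ess\,inf}f$ and $c^0=\operatorname{ess\,sup}f$, test (ii) on the positive-measure sets $\{f>c^0-\epsilon\}$ and $\{f<m+\epsilon\}$ and let $\epsilon\downarrow 0$ to obtain $c^0<\infty$ and $m\geq\gamma c^0$; since $\int f\,d\pi_0=1$ forces $c^0>0$, this is exactly (iii).

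For necessity (i)$\Rightarrow$(iii), fix an attaining pair $(\mu,\bm\beta)$ and, for each receiver $r$, set $G_r=\{m_r:\beta_r(m_r)=\pi\}$, $q_r(\theta)=\mu_r(G_r\mid\theta)$, and $Q_r=\int q_r\,d\pi_0$. Bayes consistency together with the tower property (the $\mu_r$-average over $G_r$ of the posteriors $\beta_r(m_r)\equiv\pi$ is again $\pi$) gives $\pi(E)=Q_r^{-1}\int_E q_r\,d\pi_0$ for every event $E$, so $f=q_r/Q_r$ a.s.\ and in particular $f\leq 1/Q_r$. Condition (ii) of Definition \ref{def:attainable} forces at least $n\gamma$ receivers into their $G_r$ in every realization, hence $\sum_r q_r(\theta)\geq n\gamma$ a.s.; substituting $q_r=Q_r f$ yields $f\sum_r Q_r\geq n\gamma$, while evaluating $\sum_r q_r\leq n$ on $\{f>c^0-\epsilon\}$ and letting $\epsilon\downarrow 0$ gives $c^0\sum_r Q_r\leq n$. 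Combining the two bounds produces $f\geq\gamma c^0$ a.s., which is (iii) with $c^0=\operatorname{ess\,sup}f$.

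Finally, for sufficiency (iii)$\Rightarrow$(i), set $q(\theta)=f(\theta)/c^0\in[\gamma,1]$. The idea is to have the sender, in each state $\theta$, privately and symmetrically split the audience into a random \emph{target} set $T$ with $\Pr(r\in T\mid\theta)=q(\theta)$, sending targeted and untargeted receivers distinct messages. A receiver conditioning on being targeted then holds the belief with density $q(\theta)/\int q\,d\pi_0=c^0q(\theta)=f(\theta)$, i.e.\ exactly $\pi$, and all updating rules are Bayes consistent by construction. The crux is to realize, for each $\theta$, a joint distribution over subsets of $R$ with the prescribed common marginal $q(\theta)$ and with $\|T\|\geq n\gamma$ \emph{surely} rather than merely in expectation. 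When $n\gamma$ is an integer this is immediate: randomize the size of $T$ over $\{\lfloor nq\rfloor,\lceil nq\rceil\}\subseteq\{n\gamma,\ldots,n\}$ with weights making the mean equal $nq$, and draw $T$ uniformly among subsets of the chosen size, so each marginal equals $nq/n=q(\theta)$ while the realized size is always at least $n\gamma$. The general case requires rounding $n\gamma$ up to $\lceil n\gamma\rceil$, and this cardinality constraint is the one genuinely delicate point; I would handle it here, noting that the object of interest is $\gamma_0=n_0/n$, for which $n\gamma_0=n_0$ is an integer. Feeding this family of target-set distributions into $\mu$ over the rich message space $M$ (whose richness and the sunspot supply the needed randomization) yields a strategy that $\gamma$-attains $\pi$.
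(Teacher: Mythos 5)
Your proposal is correct on the case it covers, and it takes a genuinely different route from the paper's. The paper proves the cycle (\emph{i})$\Rightarrow$(\emph{ii})$\Rightarrow$(\emph{iii})$\Rightarrow$(\emph{i}): necessity goes through (\emph{ii}) first, and for each pair of events it uses a counting argument ($\Exp{\chi \mid E'}\geq \gamma n$) to locate a \emph{single} receiver $r'$ with $\Pr(M_{r'}\mid E')\geq\gamma$, then applies Bayes' rule to that one receiver. You instead prove (\emph{i})$\Rightarrow$(\emph{iii}) directly by aggregating over the whole audience: Bayes consistency gives $q_r=Q_r f$ a.s.\ for every receiver, the attainment constraint gives $\sum_r q_r\geq n\gamma$ pointwise, the trivial bound gives $\sum_r q_r\leq n$, and combining these yields $f\geq\gamma\operatorname{ess\,sup}f$ in one stroke; you then add the (easy) implication (\emph{iii})$\Rightarrow$(\emph{ii}), which the paper's cycle never needs, while your (\emph{ii})$\Rightarrow$(\emph{iii}) is the same essential-bounds argument as the paper's. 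On sufficiency, both constructions implement the same likelihood $q(\theta)=f(\theta)/c^0$ for the persuasive message, but with different devices: you reveal membership in the target set (distinct messages for targeted and untargeted receivers) and make the \emph{size} of $T$ state-dependent, randomized over $\{\lfloor nq\rfloor,\lceil nq\rceil\}$ so that each receiver's membership probability is exactly $q(\theta)$ while $\|T\|\geq n\gamma$ surely; the paper fixes $\|T\|=\lceil n\gamma\rceil$ independently of the state and instead hides membership, sending the target message to non-target receivers with the calibrated probability $(\phi(\theta)-\gamma)/(1-\gamma)$.

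Your caution about the integer constraint is not pedantry; it points to a genuine gap \emph{in the paper}, not in your proof. The paper's construction draws $T$ of size $k=\lceil n\gamma\rceil$ but then computes posteriors as if $\Pr(r\in T)=\gamma$, which is consistent only when $n\gamma$ is an integer. Indeed, the lemma as stated fails when $n\gamma\notin\Natural$: with $n=1$ and $\gamma=1/2$, Definition \ref{def:attainable} forces the unique receiver to hold posterior $\pi$ after every on-path message, so the martingale property of Bayesian beliefs gives $\pi=\pi_0$, yet condition (\emph{iii}) admits a nondegenerate set of beliefs. The correct general statement replaces $\gamma$ by $\lceil n\gamma\rceil/n$ in (\emph{ii}) and (\emph{iii}); your proof, carried out for integer $n\gamma$ as you propose, covers exactly the case the paper actually uses ($\gamma_0=n_0/n$).
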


Lemma \ref{lemma:attain} implies that $\att(\gamma)$ is a nonempty closed and convex polytope.
Note that all possible posterior beliefs are $0$-attainable, and only $\pi_0$ is $1$-attainable. 
All the proofs are in the appendix. 
The following example shows  one way to reach the bounds from (\emph{ii}).

\begin{example}\label{eg:binary}
Suppose that $\Theta = \{0,1\}$ and fix some $\gamma = k/n$ with $k\in\{1,\ldots,n\}$.
Identify each $\pi \in\Delta\Theta$ with the probability $p := \pi(1)$.
The only two non-trivial events to consider are $\{1\}$ and $\{0\}$.
Hence, part (\emph{ii}) of Lemma \ref{lemma:attain} implies that a belief is $\gamma$-attainable 
if and only if
\begin{align}\label{eqn:binary-A}
	\gamma \left(\dfrac{p_0}{1-p_0} \right)
		\leq \dfrac{p}{1-p}
		\leq \dfrac{1}{\gamma} \left( \dfrac{p_0}{1-p_0}\right).
\end{align}
After some simple algebra, this implies that the set of 
$\gamma$-attainable beliefs corresponds to
	\begin{align}\label{eqn:binary-B}
		\left[\dfrac{p_0}{p_0 + (1-p_0)/\gamma_0} ,
			\dfrac{p_0}{p_0 + (1-p_0) \gamma_0} \right].
	\end{align}

The upper bound in (\ref{eqn:binary-B}) can be attained by the following communication strategy.
The sender first chooses a random target audience $T\subset R$ consisting of exactly $k$ receivers.
Each receiver $r$ only observes one of two possible (compound) messages $m_r=m'$ or  $m_r = m''$.
The sender always sends message $m'$ to all the receivers in $T$. 
Receivers not in $T$ receive message $m'$ if and only if $\theta_0=1$.
This strategy results in the conditional probabilities  $\mu_r(m'|1)=1$ and $\mu_r(m''|0)= \gamma_0$.
From Bayes' rule, the posterior belief $\beta_r(1|m')$ equals the upper bound of (\ref{eqn:binary-B}).
Since at least $k$ receive message $m'$, this belief is $\gamma$-attained. 
\end{example}

Lemma \ref{lemma:ver} below asserts that each extreme point of $\att(\gamma)$ 
corresponds to a partition of states into only two blocks.
States in one block have increased likelihoods relative to the prior,
and states in the other block have decreased likelihoods. 
This characterization makes $\att(\gamma)$ computationally tractable.
It is particularly advantageous in monotone environments 
when the sender would always prefer to increase the receivers' beliefs about the state. 


\begin{lemma}\label{lemma:ver}
	For any $\gamma\in(0,1]$,
	a belief $\pi \in\Delta \Theta$ is an extreme point of $\att(\gamma)$,
	if and only if there exists an event $E^+ \in \mathscr{B}_\Theta$ such that
	for every event $E \in \mathscr{B}_\Theta$
	\begin{align}\label{eqn:ver}
		\pi(E) = \dfrac{\pi_0(E\cap E^+) + \gamma \pi_0 (E\setminus E^+)}{\pi_0(E^+) + \gamma \pi_0(\Theta\setminus E^+)} .
	\end{align}	
\end{lemma}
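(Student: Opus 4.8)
The plan is to pass from beliefs to densities and to recognize (\ref{eqn:ver}) as a bang--bang condition. Writing $f = d\pi/d\pi_0$, Lemma \ref{lemma:attain}(iii) identifies $\att(\gamma)$ with the convex set of densities
\[
	D(\gamma) = \Big\{ f \ge 0 : \textstyle\int_\Theta f \, d\pi_0 = 1, \ \gamma c \le f \le c \text{ a.s. with } c = \operatorname{ess\,sup} f \Big\},
\]
and the correspondence $\pi \mapsto f$ is affine and injective, hence carries extreme points to extreme points. Computing the normalizing constant shows that a density takes only the two values $c$ on $E^+$ and $\gamma c$ on $\Theta \setminus E^+$ exactly when $\pi$ satisfies (\ref{eqn:ver}) with that same $E^+$. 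So the lemma amounts to the claim that the extreme points of $D(\gamma)$ are precisely the two--valued densities. (If $E^+$ or its complement is $\pi_0$--null, (\ref{eqn:ver}) collapses to $\pi = \pi_0$, which is a genuine extreme point only when $\att(\gamma)$ is a singleton; otherwise one takes both blocks to have positive $\pi_0$--measure, which carries the substantive content.)

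For the ``if'' direction I would verify that a two--valued $f = c\,\Char_{E^+} + \gamma c\,\Char_{\Theta \setminus E^+}$ is extreme by an averaging argument at the extreme values. Suppose $f = \tfrac12(f_1 + f_2)$ with $f_1,f_2 \in D(\gamma)$ and $c_i = \operatorname{ess\,sup} f_i$. On $E^+$ the average equals the global maximum $c$, and since $f_i \le c_i$ this gives $c \le \tfrac12(c_1 + c_2)$; on $\Theta \setminus E^+$ the average equals the global minimum $\gamma c$, and since $f_i \ge \gamma c_i$ this gives $c \ge \tfrac12(c_1 + c_2)$. Hence $c = \tfrac12(c_1 + c_2)$, and the equality cases of the two bounds force $f_i = c_i$ on $E^+$ and $f_i = \gamma c_i$ on its complement; normalization then yields $c_1 = c_2 = c$, so $f_1 = f_2 = f$.

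For the ``only if'' direction I would argue the contrapositive by perturbation. If $f$ is not two--valued, the interior set $B = \{\gamma c < f < c\}$ has positive measure, so for some $\delta > 0$ the set $B_\delta = \{\gamma c + \delta \le f \le c - \delta\}$ does too and is bounded away from both bounds. When $B_\delta$ can be split into two positive--measure pieces $B_1, B_2$, perturbing $f$ by $\pm \epsilon\big(\Char_{B_1}/\pi_0(B_1) - \Char_{B_2}/\pi_0(B_2)\big)$ preserves $\int f\,d\pi_0 = 1$ and, for small $\epsilon$, leaves both the essential supremum $c$ and the essential infimum $\gamma c$ untouched, so each perturbation stays in $D(\gamma)$ and exhibits $f$ as a nontrivial midpoint.

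I expect the genuine obstacle to be the remaining case, where $B$ is a single atom and the local perturbation above has nowhere to be compensated precisely because the bound $c = \operatorname{ess\,sup} f$ is itself a variable of the problem. The plan is to resolve it with a scaling perturbation that moves the bound together with the density: set $\psi = f$ on $\{f = c\} \cup \{f = \gamma c\}$ and let $\psi$ equal the constant on $B$ that makes $\int \psi\,d\pi_0 = 0$. Then $f \pm \epsilon\psi$ rescales the two boundary values to $c(1 \pm \epsilon)$ and $\gamma c(1 \pm \epsilon)$, preserving the ratio $\gamma$ between the new infimum and supremum, while the single interior value, being strictly inside $(\gamma c, c)$, remains inside $[\gamma c(1 \pm \epsilon), c(1 \pm \epsilon)]$ for small $\epsilon$; hence both $f \pm \epsilon\psi \in D(\gamma)$ and $f$ is again a nontrivial midpoint. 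It remains only to record the degenerate reductions and to note that (\ref{eqn:ver}) is unaffected by altering $E^+$ on a $\pi_0$--null set, which matches the freedom in the choice of $E^+$.
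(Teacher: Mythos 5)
Your proposal is correct and, at bottom, follows the same route as the paper: the paper also passes to Radon--Nikodym derivatives and proves an intermediate result (its Lemma \ref{lemma:inter}) stating that the extreme points of $\att(\gamma)$ are exactly the beliefs whose densities take only the two values $c^0$ and $\gamma c^0$ a.s., the translation to (\ref{eqn:ver}) being the same normalization computation you sketch. The interesting differences are in execution. For ``two-valued $\Rightarrow$ extreme,'' the paper runs a contradiction argument off the bounds in Lemma \ref{lemma:attain}, with a ``without loss of generality'' step that glosses over sub-cases; your averaging argument with its equality-case analysis is cleaner and buys the same conclusion transparently. For ``extreme $\Rightarrow$ two-valued,'' your Case B perturbation (scale the two boundary values by $1\pm\epsilon$, shift the interior value by a compensating constant) is precisely the paper's Step 1 construction --- additive on the middle set, multiplicative off it --- and since that construction never needs to split the middle set, it covers your Case A as well; conversely, the paper needs a separate Step 2 for densities that are two-valued but with ratio different from $\gamma$, a case your dichotomy absorbs automatically (there the interior set $B$ is just the set where $f$ equals its essential infimum). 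The most valuable difference is your parenthetical on degenerate $E^+$: it flags a genuine gap in the paper. When $E^+$ or its complement is $\pi_0$-null, (\ref{eqn:ver}) yields $\pi=\pi_0$, which is \emph{not} an extreme point whenever $\att(\gamma)$ is not a singleton (binary state, uniform prior, $\gamma=1/2$: $\att(\gamma)$ is the interval $[1/3,2/3]$ and $\pi_0$ is its midpoint), so the ``if'' direction of the lemma requires exactly the positive-measure qualification you impose; the paper's proof of that direction breaks silently there (the contradiction argument in Lemma \ref{lemma:inter} needs the set $\{d\pi/d\pi_0=\gamma c^0\}$ to have positive measure, and the final step of the paper's proof of this lemma cites Lemma \ref{lemma:attain}, which only delivers attainability, not extremality). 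Two details to tidy in a full writeup: the exhaustiveness of your dichotomy (if no $B_\delta$ splits into two positive-measure pieces, you need the short nested-atoms argument showing that $f$ is a.e.\ constant on $B$ and that $B$ itself is an atom), and the observation that in Case B the set $\{f=c\}\cup\{f=\gamma c\}$ automatically has positive measure, so your $\psi$ is nonzero and the exhibited decomposition is nontrivial.
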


\begin{example}\label{eg:three}
	Suppose that $\Theta = \{1,2,3\}$
	and $\pi_0 = (1/2,1/3,1/6)$.
	The prior likelihood ratios are 
	$\pi_0(1)/\pi_0(2) = 3/2$, $\pi_0(1)/\pi^(3) = 3$, and $\pi^(2)/\pi^(3) = 2$.
	It follows from part (\emph{ii}) of Lemma \ref{lemma:attain}
	that a posterior $\pi$ is \sfrac{1}{2}-attainable
	if and only if  $\pi(1)/\pi(2) \in [3/4,3]$, $\pi(1)/\pi(3) \in [3/2,6]$, 
	and $\pi(2)/\pi(3) \in [1,4]$.
	These conditions correspond to the cones spanning from each vertex of the 
	simplex in Figure \ref{fig:concert}.
	$\att(1/2)$ is the shaded irregular hexagon surrounding $\pi_0$.
	The vertex $\pi' = (1/3,4/9,2/9)$ is given by (\ref{eqn:ver}) with  $E^+ = \{\theta_3,\theta_2\}$.
	It maximizes $\int_\Theta \theta\,d\pi(\theta)$ subject to  $\pi \in \att(1/2)$.
\end{example}

\begin{figure}[t]
\centering
	\psset{unit=12mm}
	\begin{pspicture}(1,0.4)(10,8.6)
	\rput{15}(5,5){
		\pnode(4;360){o1}\psdot(o1) 
		\pnode(4;120){o2}\psdot(o2) 
		\pnode(4;240){o3}\psdot(o3) 
		\pspolygon(o1)(o2)(o3)
		\pnode(!\psGetNodeCenter{o2} \psGetNodeCenter{o1} 
			3 o1.x mul 1 o2.x mul add 4 div
			3 o1.y mul 1 o2.y mul add 4 div){pH12}
		\pnode(!\psGetNodeCenter{o2} \psGetNodeCenter{o1} 
			3 o1.x mul 4 o2.x mul add 7 div
			3 o1.y mul 4 o2.y mul add 7 div){pL12}
		\pnode(!\psGetNodeCenter{o3} \psGetNodeCenter{o1} 
			3 o1.x mul 2 o3.x mul add 5 div
			3 o1.y mul 2 o3.y mul add 5 div){pH13}
		\pnode(!\psGetNodeCenter{o3} \psGetNodeCenter{o1} 
			6 o1.x mul 1 o3.x mul add 7 div
			6 o1.y mul 1 o3.y mul add 7 div){pL13}
		\pnode(!\psGetNodeCenter{o2} \psGetNodeCenter{o3} 
			4 o2.x mul 1 o3.x mul add 5 div
			4 o2.y mul 1 o3.y mul add 5 div){pL23}
		\pnode(!\psGetNodeCenter{o2} \psGetNodeCenter{o3} 
			1 o2.x mul 1 o3.x mul add 2 div
			1 o2.y mul 1 o3.y mul add 2 div){pH23}
		\psset{linecolor=black!90,linestyle=dotted,fillstyle=solid,fillcolor=red!40!blue!80!black!10}
			\pspolygon(o1)(pH23)(pL23)
			\pspolygon(o2)(pH13)(pL13)
			\pspolygon(o3)(pH12)(pL12)
		\psset{fillstyle=none}
			\pspolygon(o1)(pH23)(pL23)
			\pspolygon(o2)(pH13)(pL13)
			\pspolygon(o3)(pH12)(pL12)
		\psreset%
		\pnode(!\psGetNodeCenter{o2} \psGetNodeCenter{o1} 
			3 o1.x mul 2 o2.x mul add 5 div
			3 o1.y mul 2 o2.y mul add 5 div){p12}
		\pnode(!\psGetNodeCenter{o3} \psGetNodeCenter{o1} 
			3 o1.x mul 1 o3.x mul add 4 div
			3 o1.y mul 1 o3.y mul add 4 div){p13}
		\pnode(!\psGetNodeCenter{o2} \psGetNodeCenter{o3} 
			2 o2.x mul 1 o3.x mul add 3 div
			2 o2.y mul 1 o3.y mul add 3 div){p23}
		\psset{linestyle=dotted}
			\pssavepath{pathA}{\psline(o1)(p23)}
			\pssavepath{pathB}{\psline(o2)(p13)}
			\psline(o3)(p12)
		\psreset%
		\pspolygon(o1)(o2)(o3)
		\psintersect[showpoints,name=prior]{pathA}{pathB}
		\psset{linestyle=none}
			\pssavepath{patha1}{\psline(o1)(pH23)}
			\pssavepath{pathb1}{\psline(o3)(p12)}
			\psintersect[name=ver1]{patha1}{pathb1}
			\pssavepath{patha2}{\psline(o1)(pL23)}
			\pssavepath{pathb2}{\psline(o3)(p12)}
			\psintersect[name=ver2]{patha2}{pathb2}
			\pssavepath{patha3}{\psline(o2)(pH13)}
			\pssavepath{pathb3}{\psline(o1)(p23)}
			\psintersect[name=ver3]{patha3}{pathb3}
			\pssavepath{patha4}{\psline(o2)(pL13)}
			\pssavepath{pathb4}{\psline(o1)(p23)}
			\psintersect[name=ver4]{patha4}{pathb4}
			\pssavepath{patha5}{\psline(o3)(pH12)}
			\pssavepath{pathb5}{\psline(o2)(p13)}
			\psintersect[name=ver5]{patha5}{pathb5}
			\pssavepath{patha6}{\psline(o3)(pL12)}
			\pssavepath{pathb6}{\psline(o2)(p13)}
			\psintersect[name=ver6]{patha6}{pathb6}
		\psreset%
		\pspolygon*[opacity=0.5,linecolor=red!40!blue!80!black]
			(ver11)(ver31)(ver61)(ver21)(ver41)(ver51)
		\psdots(ver31)(prior1)
		\uput{5pt}[180]{-15}(ver31){$p'$}
		\uput{5pt}[180]{-15}(prior1){$\pi_0$}
			\psset{offset=15pt,nodesep=0,arrows=|-|,tbarsize=5pt 15}
				\pcline(o2)(p12) \ncput*[nrot=:U]{$3\sqrt{2}/5$}
				\pcline(p12)(o1) \ncput*[nrot=:U]{$2\sqrt{2}/5$}
			\psreset%
	}
			\uput[015](o1){$\theta=1$}
			\uput[135](o2){$\theta=2$}
			\uput[255](o3){$\theta=3$}
	\end{pspicture}\\
	\caption{\sfrac{1}{2}-attainable beliefs for Example \ref{eg:three}.}
	\label{fig:concert}
\end{figure}
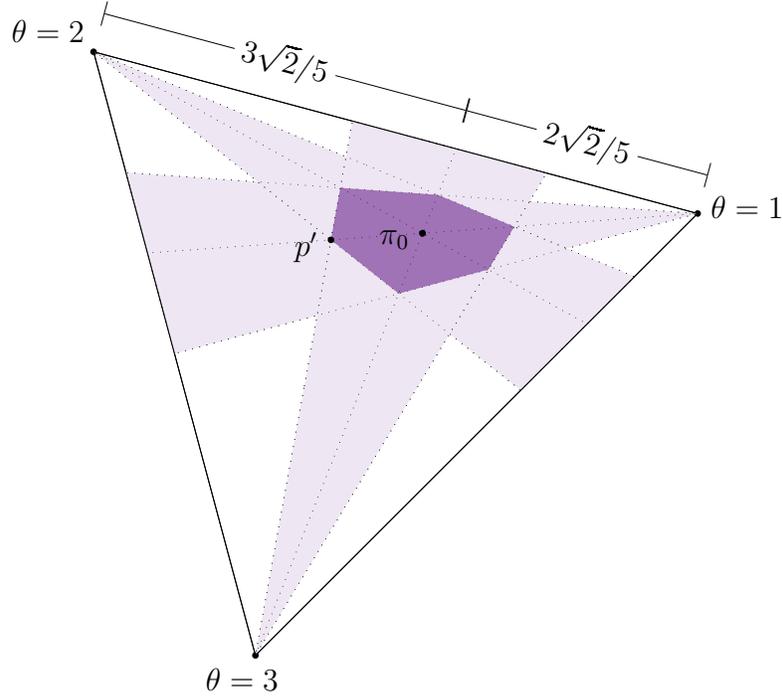

\begin{example}\label{eg:cont}
Suppose $\theta^0$ is distributed uniformly on $[0,1]$.
What is the $\gamma$-attainable belief that maximizes the expectation of the state?
Theorem 32.3 in \cite{convex} implies that the maximum is attained at an 
extreme point of $\att(\gamma)$.
Lemma \ref{lemma:ver} thus implies that the maximizer $\pi^*$ takes the form (\ref{eqn:ver}),
and it must be the case that $E^+ = [\theta_1,1]$
for some $\theta_1\in(0,1).$
It follows that
\begin{align}\label{eqn:eg-cont}
	\int_0^1 \! \theta \,d\pi^*(\theta)
		& = \int_0^{\theta_1} \frac{\gamma \theta}{\gamma \theta_1 + 1 - \theta_1} d\theta
			+  \int_{\theta_1}^1 \frac{\theta}{\gamma \theta_1 + 1 - \theta_1} d\theta 
		  = \frac{\gamma\theta_1^2 + 1-\theta_1^2}{2(\gamma \theta_1 + 1 - \theta_1)}.
\end{align} 
This expression is maximized when $\theta_1 = 1/(1+\sqrt{\gamma})$,
and the maximum is $\int_\Theta  \theta \,d\pi^*(\theta) = 1/(1+\sqrt{\gamma})$.
Note that this maximum equals $1/2$ when $\gamma=1$, equals $1$ when $\gamma=0$,
and is strictly decreasing in $\gamma$.
\end{example}

\section{Information transmission and persuasion} \label{sec:main}

\subsection{Persuading a small part of an audience} 
It is possible to exert great influence over very small fractions of an  audience. 
Suppose 
that the sender's utility is maximized by a constant action profile.
Further suppose that this action profile is a best response to a belief $\pi^*$
with a Radon-Nikodym derivative bounded both above and away from zero.
Lemma \ref{lemma:attain} implies that $\pi^*$
would be $\gamma$-attainable for sufficiently low values of $\gamma$.
Hence, when the sender only cares about small fractions of her audience,
she would be able to reach her preferred outcome in equilibrium.
Formally,

\begin{assumption}\label{ass:max}
	There exist  
	an action $a^*$ and a belief $\pi^*$ such that 
	 $a^*\in\BR(\pi^*)$
	 and 
$u_S(a^*,\ldots,a^*) \geq u_S(\act)$ for every action profile $\act$.
\end{assumption}

\begin{assumption}\label{ass:extra}
	$\pi^*\ll\pi_0$,
	and there exist two numbers $0<\ushort\nu \leq \bar\nu<\infty$
	such that $d\pi^*/d\pi_0 \in [\ushort\nu,\bar\nu]$ a.s..%
	\footnote{If $\|\Theta\|<+\infty$, 
		Assumption \ref{ass:extra} holds if and only if $\supp\pi_0=\supp\pi^*$.}
\end{assumption}




\begin{proposition}\label{prop:first-best}
	Under assumptions \ref{ass:tail}--\ref{ass:extra}
	there exists  $\bar\gamma = \bar\gamma(\pi_0,\pi^*) \in (0,1)$ 
	such that if 
	$\gamma_0\leq \bar\gamma$, 
	then the game admits an equilibrium in which the sender obtains her preferred outcome. 
\end{proposition}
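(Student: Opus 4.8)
The plan is to convert Assumption \ref{ass:extra} into $\gamma_0$-attainability of $\pi^*$ via Lemma \ref{lemma:attain}, and then to build an equilibrium in which every receiver holding the posterior $\pi^*$ plays $a^*$. First I would set the threshold $\bar\gamma := \ushort\nu/\bar\nu$, which lies in $(0,1]$ by Assumption \ref{ass:extra}; in the degenerate case where this ratio equals $1$ (so $d\pi^*/d\pi_0 \equiv 1$ and $\pi^* = \pi_0$) I would instead take any $\bar\gamma \in (0,1)$, since $\pi_0$ is $\gamma$-attainable for every $\gamma$. For $\gamma_0 \le \bar\gamma$ I would then verify condition (\emph{iii}) of Lemma \ref{lemma:attain} with $c^0 = \bar\nu$: Assumption \ref{ass:extra} gives $\pi^* \ll \pi_0$ and $d\pi^*/d\pi_0 \le \bar\nu = c^0$, while $d\pi^*/d\pi_0 \ge \ushort\nu = \bar\gamma\,\bar\nu \ge \gamma_0\,\bar\nu = \gamma_0 c^0$. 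Hence $\pi^*$ is $\gamma_0$-attainable.

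Next I would fix a communication strategy $\mu$ and updating rules $\bm\beta$ that $\gamma_0$-attain $\pi^*$ (Definition \ref{def:attainable}), so that in every state $\theta$ and every message profile $\mess$ in the support of $\mu(\theta)$ there is a target set $T$ with $\|T\| \ge n\gamma_0 = n_0$ on which $\beta_r(m_r) = \pi^*$. I would complete these into receiver strategies by having every receiver whose posterior is $\pi^*$ choose $a^*$ (admissible because $a^* \in \BR(\pi^*)$ by Assumption \ref{ass:max}) and every remaining receiver best-respond to his own posterior. This guarantees that at least $n_0$ receivers play $a^*$ in every contingency, delivering $a^*$ to a full pivotal block of the audience.

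The easy half of the verification is sender optimality: since $u_S(a^*,\ldots,a^*) \ge u_S(\act)$ for all $\act$ by Assumption \ref{ass:max}, any message profile whose induced action profile attains this value is a maximizer of the sender's sunspot-averaged payoff, so equilibrium condition (\emph{iii}) reduces to showing that the on-path outcome hits the maximum, while conditions (\emph{i})--(\emph{ii}) hold by construction. The main obstacle is precisely this last point. Because $u_S$ depends only on the top $n_0$ order statistics (Assumption \ref{ass:tail}), even a single receiver outside $T$ playing an action above $a^*$ would change the top order statistics and could pull $u_S$ strictly below its maximum. The crux is therefore to arrange the off-target communication so that the remaining receivers never play above $a^*$, making the top $n_0$ order statistics equal to $(a^*,\ldots,a^*)$ in every state. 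I expect to handle this by exploiting that the posteriors left for the non-target receivers are forced to lie on the opposite side of $\pi_0$ from $\pi^*$ (beliefs must average back to the prior), which keeps their best responses weakly below $a^*$ in the relevant environments; pinning this down rigorously is the delicate step.
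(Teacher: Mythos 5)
You follow the paper's own proof essentially step for step: the same threshold $\bar\gamma = \ushort\nu/\bar\nu$ (you are in fact more careful, since the paper does not treat the degenerate case $\pi^*=\pi_0$ separately), the same verification of condition (\emph{iii}) of Lemma \ref{lemma:attain} with $c^0=\bar\nu$, and the same completion into a profile where every receiver with posterior $\pi^*$ plays $a^*$ and everyone else best responds. You also correctly observe that the whole verification collapses to a single claim: that the on-path action profile attains $u_S(a^*,\ldots,a^*)$, since then the sender's incentive condition is automatic. The step you leave open is exactly the step the paper disposes of in one sentence --- ``By construction, at least $\gamma_0$ of the receivers satisfy $a_r=a^*$. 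Hence, it maximizes the sender's utility'' --- with no supporting argument. That inference is valid if $u_S$ is non-decreasing in the top $n_0$ order statistics: then at least $n_0$ receivers playing $a^*$ makes those statistics pointwise $\geq a^*$, so $u_S(\act)\geq u_S(a^*,\ldots,a^*)$, while Assumption \ref{ass:max} gives the reverse inequality. This monotonicity holds in all of the paper's applications and under Assumption \ref{ass:sep}, but it is not implied by Assumptions \ref{ass:tail}--\ref{ass:extra}.

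Your proposed repair, however, does not work, and in fact the step cannot be completed at this level of generality. The martingale property does push non-target posteriors to ``the other side'' of the prior, but $\Delta\Theta$ carries no order and best responses need not be monotone, so nothing keeps non-target actions below $a^*$. Concretely: let $\Theta=\{0,1\}$, $A=\{a^*,a'\}$ with $a'>a^*$, let receiver payoffs make $a^*$ optimal precisely when $\pi(1)\geq 3/5$ (e.g.\ $u_R(a^*,1)=1$, $u_R(a^*,0)=0$, $u_R(a',1)=0$, $u_R(a',0)=3/2$), let $\pi_0(1)=2/5$, and let $u_S(\act)=-\left(\max_r a_r - a^*\right)^2$ with $n\geq 3$, so that $n_0=1$. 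Assumptions \ref{ass:tail}--\ref{ass:extra} hold with $\pi^*(1)=3/5$, and $\bar\gamma=4/9\geq\gamma_0=1/n$, yet the sender's preferred outcome requires every receiver to play $a^*$ almost surely, hence every on-path posterior to assign probability at least $3/5$ to state $1$, contradicting the fact that each receiver's posteriors must average to the prior $2/5$. So no equilibrium --- indeed no Bayes-consistent belief system --- delivers the preferred outcome here. The conclusion is that the ``delicate step'' you flagged is a genuine gap, both in your proposal and in the paper's own proof: it can only be closed by adding a monotonicity-type restriction on $u_S$, under which your construction (and the paper's) becomes complete, because then non-target receivers cannot hurt the sender no matter what they play.
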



\subsection{Effective information transmission}
I have assumed that the sender has \emph{transparent motives},
in that her preferences do not depend on the state and are thus common knowledge. 
Cheap-talk models with transparent motives often allow for some information transmission in equilibrium.
See, for instance, Theorem 1 in \cite{chakraborty2010} and
Proposition 1 in \cite{lipnowski2017}.
The question I address is whether the sender can transmit sufficient information
in order to influence the behavior of her audience to her benefit.

Let  $v_0$ denote the sender's \emph{value function}.
That is, $v_0(\pi)$ specifies the maximum utility that the sender could obtain 
if \emph{all} receivers shared a posterior  $\pi$ and acted optimally,
\begin{align}
	\label{eqn:vS}
	v_0(\pi) = \max\left\{ u_S(\act)\:\big|\: a_r\in\BR(\pi) \text{ for all receivers }  r \right\}.
\end{align}
Say that an equilibrium exhibits \emph{effective information transmission}
if the sender's expected equilibrium payoff is strictly greater than $v_0(\pi_0)$.
An excess audience allows for effective information transmission
for some prior beliefs under two mild sensitivity assumptions that rule out trivial cases.


\begin{assumption}\label{ass:responsive}
	$u_S(a^*,\ldots,a^*) > u_S(a',\ldots,a')$ for every action $a'\neq a^*$,
	and there exists $\pi' \in \Delta \Theta$ such that $a^*\not\in \BR(\pi')$,
	$\pi'\ll\pi^*$,
	and there exist two numbers $0<\ushort\nu'\leq \bar\nu'<\infty$
	such that $d\pi'/d\pi^* \in [\ushort\nu',\bar\nu']$ a.s..%
	\footnote{If $\|\Theta\|<+\infty$, the last condition is satisfied whenever 
		$\supp\pi'=\supp\pi^*$.}
\end{assumption}

\begin{proposition}\label{prop:IT}
	Under assumptions \ref{ass:tail}, \ref{ass:max}, and \ref{ass:responsive},
	if there is an excess audience, then there exists 
	a nonempty set  $P = P(\pi^*,\pi',\gamma_0) \subseteq \Delta\Theta$ 
	such that the game has an equilibrium with effective information transmission
	as long as $\pi_0 \in P$.
	Moreover, if $\|\Theta\|<+\infty$, then $P$ has a nonempty interior.%
\end{proposition}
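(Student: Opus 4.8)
The plan is to construct, for each prior in a suitable set $P$, an equilibrium whose outcome is the sender's global maximum $\bar u := u_S(a^*,\ldots,a^*)$; since $P$ consists of priors for which $v_0(\pi_0)<\bar u$, every such equilibrium exhibits effective information transmission. Built from $\pi^*$ and $\pi'$ as the proposition anticipates, I would take
\[
	P \;=\; \bigl\{\, \pi_0 \in \Delta\Theta \;:\; \att(\gamma_0)\cap\{\pi:a^*\in\BR(\pi)\}\neq\emptyset \ \text{ and }\ v_0(\pi_0)<\bar u \,\bigr\},
\]
where $\att(\gamma_0)$ is the set of $\gamma_0$-attainable beliefs from the prior $\pi_0$. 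The first condition says that from $\pi_0$ the sender can implant in $n_0$ receivers a belief inducing her favourite action; the second says the babbling benchmark $v_0(\pi_0)$ falls short of $\bar u$, leaving room to improve.

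Fix $\pi_0\in P$ and a belief $\pi^\dagger\in\att(\gamma_0)$ with $a^*\in\BR(\pi^\dagger)$. Following the politician's strategy and Example \ref{eg:binary}, I would have the sender draw a target set $T\subseteq R$ with $\|T\|=n_0$, send the ``good'' message to every member of $T$ in every state, and send it to the remaining receivers with state-contingent probabilities calibrated, via Lemma \ref{lemma:attain}, so that Bayes' rule delivers the posterior $\pi^\dagger$ to every recipient of the good message; this calibration is feasible precisely because $\pi^\dagger$ is $\gamma_0$-attainable. Recipients of the good message---at least the $n_0$ targets---best respond with $a^*$, and I would assign the remaining receivers a best response to their (bad-message) posterior that does not exceed $a^*$, so that the $n_0$ highest actions all equal $a^*$ and the outcome is $\bar u$. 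Bayesian consistency holds by Lemma \ref{lemma:attain}, receiver optimality by the choice of actions, and sender optimality is automatic: the sender can at most guarantee herself $\bar u$ (by sending the good message to all, which the fixed receiver strategies answer with $a^*$), and she already attains $\bar u$ on path, so no deviation helps. As $v_0(\pi_0)<\bar u$, this is effective information transmission. The delicate point here is keeping the non-targeted actions weakly below $a^*$; I would arrange this by choosing the boosted block in the extreme-point representation of Lemma \ref{lemma:ver} so that the complementary bad-message posterior induces a weakly lower action, using $A\subseteq\Real$.

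The crux, and the step I expect to be hardest, is proving $P\neq\emptyset$. The set $\{\pi:a^*\in\BR(\pi)\}$ is convex and closed, contains $\pi^*$ by Assumption \ref{ass:max}, and is proper because $\pi'$ lies outside it by Assumption \ref{ass:responsive}; hence it has a boundary. There is a tension: priors near $\pi^*$ make the first condition easy but violate the second (they already induce $a^*$), while priors near $\pi'$ satisfy the second but may lie too far for $\att(\gamma_0)$ to reach the region. I would resolve it by sitting right on the boundary. Along the segment from $\pi^*$ to $\pi'$---which, since $\pi'\sim\pi^*$ with $d\pi'/d\pi^*$ bounded away from $0$ and $\infty$, consists of full-support beliefs equivalent to $\pi^*$---continuity of $\pi\mapsto\int_\Theta u_R(a,\theta)\,d\pi$ yields a boundary belief $\pi^\partial$ at which $a^*$ is a best response tied with some $a_0\neq a^*$. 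I would then pick $\pi^\dagger$ just inside the region (so $a^*$ is strictly optimal) and $\pi_0$ just outside it (so $a^*\notin\BR(\pi_0)$ and $\BR(\pi_0)=\{a_0\}$, whence $v_0(\pi_0)=u_S(a_0,\ldots,a_0)<\bar u$ by Assumption \ref{ass:responsive}), both in a small neighbourhood of $\pi^\partial$. Crucially, because there is an excess audience ($\gamma_0<1$), part (\emph{iii}) of Lemma \ref{lemma:attain} leaves strict slack at coincident beliefs: when $\pi_0$ is close to $\pi^\dagger$ the derivative $d\pi^\dagger/d\pi_0$ is arbitrarily close to the constant $1$, so its range fits within a factor $1/\gamma_0$ and $\pi^\dagger$ remains $\gamma_0$-attainable from $\pi_0$. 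Thus $\pi_0\in P$.

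Finally, for the ``moreover'' with $\|\Theta\|<\infty$, I would note that both defining conditions are open in $\pi_0$. The strict inequality $v_0(\pi_0)<\bar u$ and the strict inequalities witnessing $a^*\notin\BR(\pi_0)$ persist under small perturbations of $\pi_0$ in the simplex, and by part (\emph{ii}) of Lemma \ref{lemma:attain} the polytope $\att(\gamma_0)$ varies continuously with $\pi_0$; since the chosen $\pi^\dagger$ lies strictly inside $\{\pi:a^*\in\BR(\pi)\}$, the intersection in the first condition survives nearby. Hence a full-dimensional neighbourhood of $\pi_0$ lies in $P$, giving it nonempty interior.
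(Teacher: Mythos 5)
Your argument is, in substance, the paper's own: the paper also moves along the segment $\pi_\lambda = \lambda\pi' + (1-\lambda)\pi^*$, takes the last point $\bar\pi$ (your $\pi^\partial$) at which $a^* \in \BR(\pi_\lambda)$, places the prior at $\pi_\lambda$ for $\lambda$ slightly beyond that point, and uses precisely your ``strict slack'' observation --- $d\bar\pi/d\pi_\lambda$ is a.s.\ pinned near $1$ because $d\pi'/d\pi^*$ is bounded in $[\ushort\nu',\bar\nu']$, so with $\gamma_0<1$ Lemma \ref{lemma:attain} makes $\bar\pi$ attainable from $\pi_\lambda$ --- and then invokes the equilibrium of Proposition \ref{prop:first-best} to obtain payoff $u_S(a^*,\ldots,a^*) > v_0(\pi_\lambda)$. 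Your worry about non-targeted receivers taking actions \emph{above} $a^*$ is legitimate, but your fix via Lemma \ref{lemma:ver} is not justified: absent any monotonicity link between posteriors and best responses, no choice of boosted block forces the bad-message posterior's best replies weakly below $a^*$. Note, however, that the paper's proof of Proposition \ref{prop:first-best} passes over exactly this point in silence, so it is a shared issue rather than one your route introduces.

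The genuine gap is in your ``moreover'' step. You require a $\pi^\dagger$ at which $a^*$ is \emph{strictly} optimal, i.e.\ a point strictly inside $\BR^{-1}(a^*)$, and you let the attainable polytope move while the interior of $\BR^{-1}(a^*)$ absorbs the movement. Assumptions \ref{ass:max} and \ref{ass:responsive} do not provide such a point: $\BR^{-1}(a^*)$ may have empty interior, and $a^*$ may never be strictly optimal. For instance, with $A=[0,1]$ and $u_R(a,\theta) = -(a-\theta)^2$ every belief has a unique best response, so $\BR^{-1}(a^*)$ is the slice $\left\{\pi \::\: \int_\Theta \theta\,d\pi(\theta) = a^*\right\}$, a lower-dimensional set; this situation is perfectly consistent with the proposition's hypotheses. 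The paper's proof is arranged so that only \emph{membership} in $\BR^{-1}(a^*)$ is ever needed: it fixes the attained belief at the boundary point $\bar\pi \in \BR^{-1}(a^*)$ and perturbs the prior alone, using the symmetry $\pi \in \att(\gamma_0,\bar\pi) \Leftrightarrow \bar\pi \in \att(\gamma_0,\pi)$ together with the fact that, when $\gamma_0<1$ and $\Theta$ is finite, $\bar\pi$ lies in the interior of $\att(\gamma_0,\bar\pi)$; then $P'' = P'\setminus \BR^{-1}(a^*)$ is open and nonempty because $\bar\pi$ is a boundary point of the closed set $\BR^{-1}(a^*)$, and every prior in $P''$ works. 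Your own slack remark already contains the repair: the strict interiority you need is of $\pi^\dagger$ in the \emph{attainability polytope} (the likelihood-ratio bound holds with room to spare when $\pi^\dagger$ and $\pi_0$ are close and $\gamma_0 < 1$), not in $\BR^{-1}(a^*)$; with that substitution the same fixed $\pi^\dagger$ remains attainable from all nearby priors, and $a^*\in\BR(\pi^\dagger)$ need only hold weakly.
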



\subsection{Sender-optimal equilibrium under separability}

The sender's \emph{maximum equilibrium payoff} $v^*$ is  
the maximum utility that the sender can obtain in any equilibrium.
This section characterizes $v^*$ under the following separability assumption.

\begin{assumption}\label{ass:sep}
There exist a strictly increasing function $U_S:A\to\Real$ such that
\begin{align}\label{eqn:sep}
	u_S(\act) = \dfrac{1}{n_0}\sum_{i=1}^{n_0} U_S\left( a^{(n+1-i)} \right).
\end{align}
\end{assumption}

The characterization relies on two operators 
defined on the set of upper-semicontinuous functions from beliefs to sender payoffs. 
First,  $\qconv v$ is the quasiconcave envelope of $v$.
That is, $\qconv v$ is the pointwise-minimum, 
quasiconcave and upper semicontinuous function that majorizes $v$.
Second, $\attain v$ gives the maximum of $v$ arising from $\gamma_0$-attainable beliefs,
i.e.,
\begin{align}
	\label{eqn:vS-hat}
	\attain v(\pi) = \max\left\{  v(\pi') \:\big|\: \pi\in \att(\gamma_0,\pi)\right\},
\end{align}
where $\att(\gamma_0,\pi)$ is the set of beliefs that would be $\gamma_0$-attainable 
if $\pi_0 = \pi$.
Intuitively, $\qconv v$ operates by ``flooding the valleys'' 
while $\attain v$ is obtained by ``widening the hills.''
See the left and center panels of Figure \ref{fig:bookie} in Section \ref{sec:bookie}
for an example. 
%
%

%

\begin{theorem}\label{prop:value}
	Under assumptions \ref{ass:tail}  and \ref{ass:sep}, $v^* = \attain \qconv v_0(\pi_0)$.
\end{theorem}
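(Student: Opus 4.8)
The plan is to prove the two inequalities $v^*\le\attain\qconv v_0(\pi_0)$ and $v^*\ge\attain\qconv v_0(\pi_0)$ separately, importing throughout the single-audience benchmark of \cite{lipnowski2017}: when the sender cares about her entire audience, the payoff she can secure through cheap talk starting from a belief $\pi$ is exactly $\qconv v_0(\pi)$, achieved by a Bayes-plausible split of $\pi$ into posteriors on which $v_0$ is constant. Under Assumption \ref{ass:sep}, the value function reduces to $v_0(\pi)=\max\{U_S(a)\mid a\in\BR(\pi)\}$, since placing every receiver at the $U_S$-largest best response is optimal; this reduced form is what lets a single per-receiver value drive the entire payoff through \eqref{eqn:sep}. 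Before either direction, I would record that the maximizer defining $\attain\qconv v_0(\pi_0)$ exists: by Lemma \ref{lemma:attain} the set $\att(\gamma_0,\pi_0)$ is a nonempty compact convex polytope, and $\qconv v_0$ is upper semicontinuous, so the supremum in \eqref{eqn:vS-hat} is attained by some $\pi'$ with $\qconv v_0(\pi')=\attain\qconv v_0(\pi_0)=:w$.

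For the achievability direction I would build an equilibrium delivering $w$. Since $w=\qconv v_0(\pi')$, the Lipnowski--Ravid characterization supplies a distribution $\rho$ of posteriors with barycenter $\pi'$ and $v_0=w$ on $\supp\rho$; by the reduced form of $v_0$, each such $\pi''$ carries an action $a(\pi'')\in\BR(\pi'')$ with $U_S(a(\pi''))=w$. I would then splice the random-target device of Example \ref{eg:binary} together with the split $\rho$ into one communication strategy: privately draw a target set $T$ of exactly $n_0$ receivers; route to each member of $T$ a message whose Bayesian posterior is drawn according to $\rho$; and choose the messages to the $n-n_0$ non-target receivers exactly so that a receiver who cannot tell whether he belongs to $T$ still updates, by Bayes' rule, to the intended posterior. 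The $\gamma_0$-attainability of $\pi'$ (Lemma \ref{lemma:attain}) together with the Bayes-plausibility of $\rho$ around $\pi'$ are precisely the conditions under which such a private targeted signal exists. The non-target receivers are steered to strictly lower actions, which is possible because $U_S$ is strictly increasing, so that the top $n_0$ order statistics are exactly the actions of $T$. Separability then forces the sender's realized payoff to equal $\tfrac1{n_0}\sum_{r\in T}U_S(a_r)=w$ in every state; this both yields payoff $w$ and makes the sender indifferent across all on-path profiles, verifying equilibrium condition (iii), while conditions (i)--(ii) hold by construction.

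For the converse I would fix an arbitrary equilibrium and show its payoff cannot exceed $w$. Transparent motives together with condition (iii) pin the sender's sunspot-expected payoff to a single constant across all on-path profiles, hence to one number $v^*$ in every state. Using Assumption \ref{ass:sep} and $U_S(a_r)\le v_0(\beta_r(m_r))$, I would summarize the equilibrium's effect on the pivotal receivers by a single belief $\hat\pi$ --- the appropriately weighted average of the pivotal receivers' posteriors across states --- and argue two things about it. First, that $\hat\pi\in\att(\gamma_0,\pi_0)$, which follows from a likelihood-ratio estimate bounding how far the selection of the top $n_0$ receivers can tilt the aggregate belief away from $\pi_0$ (exactly the inequality in part (ii) of Lemma \ref{lemma:attain} with constant $\gamma_0$). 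Second, that $v^*\le\qconv v_0(\hat\pi)$, obtained by transporting the Lipnowski--Ravid indifference bound through the separable aggregation. Combining the two gives $v^*\le\qconv v_0(\hat\pi)\le\attain\qconv v_0(\pi_0)$.

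I expect the main obstacle to be the achievability construction --- specifically, making a single simultaneous-message strategy discharge all its jobs at once. The random-target split has to reproduce $\rho$ while remaining Bayes-consistent for receivers who do not observe their target status; the non-target receivers must be held strictly below the target in the action order so that the pivotal order statistics coincide with $T$; and the sender's indifference, which in the separable payoff is only a statement about the average over the top $n_0$ actions, has to be delivered state by state. The upper bound carries a parallel subtlety I would treat carefully: because the equilibrium indifference constrains only the pooled pivotal payoff rather than each pivotal receiver's value, ruling out that the sender exploits heterogeneity among pivotal receivers to beat $\qconv v_0$ at an attainable belief is where the separability assumption and the likelihood-ratio characterization of Lemma \ref{lemma:attain} must be combined with care.
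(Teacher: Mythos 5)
Your overall architecture tracks the paper's: prove both inequalities, use Lipnowski--Ravid as the engine, splice a random-target device with an optimal split at an attainable belief for achievability, and aggregate the pivotal receivers' posteriors into a single attainable belief for the upper bound. (On the upper bound, the paper first symmetrizes the equilibrium via its Lemma \ref{lemma:symmetric} and then realizes the aggregated belief as the literal Bayesian posterior of a derived two-message strategy, which makes both $\gamma_0$-attainability and membership in the convex hull of the pivotal posteriors immediate; you skip symmetrization, which you in fact need, since attainability requires the \emph{same} posterior across all pivotal receivers.) The genuine gap is in achievability, precisely at the point you flag as the ``main obstacle'': you assert the non-target receivers can be ``steered to strictly lower actions'' because $U_S$ is strictly increasing. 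The sender does not control this. The non-target receivers' posterior after the complementary message is pinned down by Bayes' rule --- by the martingale property it must offset $\hat\pi$ so that posteriors average back to $\pi_0$ --- and nothing in the assumptions prevents the receivers' best response $a''$ to that forced belief from satisfying $U_S(a'')>w$. Strict monotonicity of $U_S$ orders payoffs given actions; it says nothing about which action $\BR$ selects at the complementary belief. And when $U_S(a'')>w$, your proposed profile is not an equilibrium: the sender's realized payoff varies across states (profiles containing more complementary messages are strictly better), and deviating to send the complementary message to everyone yields strictly more than $w$, violating equilibrium condition (\emph{iii}).

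The paper closes this hole with a dichotomy you would need to add. If $U_S(a'')\le w$, the spliced construction is an equilibrium and delivers $w$, as you describe. If instead $U_S(a'')>w$, the construction is abandoned entirely: in that case $\pi_0$ lies in the convex hull of the complementary posterior $\beta_r(m'')$ and the on-path split posteriors, and $v_0$ is at least $w$ at every one of these beliefs (at $\beta_r(m'')$ because $v_0(\beta_r(m''))\ge U_S(a'')>w$). Quasiconcavity then gives $\qconv v_0(\pi_0)\ge w$ directly, so the plain public-message Lipnowski--Ravid replica (the paper's Lemma \ref{lemma:LR}, which yields $v^*\ge \qconv v_0(\pi_0)$ with no excess-audience construction at all) already secures the bound. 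Without this fallback your lower-bound argument fails on a nontrivial set of primitives. A second, smaller imprecision: Lipnowski--Ravid's value $w$ at a split posterior may be achieved only as a sunspot-randomized expectation over best responses, not by a single pure action $a(\pi'')$ with $U_S(a(\pi''))=w$; the paper accordingly carries the sunspot through the construction rather than selecting pure actions.
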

%

Lemma $1$ implies that $\attain \qconv v_0 = \qconv v_0$ when $\gamma_0=1$.
Hence, Theorem \ref{prop:value} reduces to Theorem 2 in \cite{lipnowski2017}
in that case.
However, the two results differ whenever there is an excess audience and 
the assumptions from Proposition \ref{prop:IT} hold. 
The difference between the results corresponds to the benefit from an excess
audience defined in the following subsection. 

\subsection{The benefit from an excess audience and private communication}
What happens to the sender's payoff when she has to persuade
a larger or smaller fraction of her audience?
Proposition \ref{prop:nece} below  gives sufficient conditions
under which an excess audience is \emph{necessary} for effective information transmission. 
These conditions are satisfied by the election, excess capacity, and labor market 
applications in Section \ref{sec:app}.

\begin{proposition}
	\label{prop:nece}
	Under assumptions \ref{ass:tail}  and \ref{ass:sep}, 
	if $v_0$ is quasi-concave and there is an equilibrium with effective information transmission, 
	then there is an excess audience. 
\end{proposition}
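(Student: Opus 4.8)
The plan is to prove the contrapositive: if there is no excess audience and $v_0$ is quasi-concave, then $v^* = v_0(\pi_0)$, so that no equilibrium can exhibit effective information transmission. The absence of an excess audience means $n_0 \in \{0,n\}$, and I would dispose of the two cases separately.

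First, the degenerate case $n_0 = 0$: here the sender is indifferent between all outcomes, so $u_S$ is constant, $v_0$ is the same constant everywhere, and $v^* = v_0(\pi_0)$ trivially. In fact Assumption \ref{ass:sep} presumes $n_0 \geq 1$ for the average in \eqref{eqn:sep} to be well defined, so this case can simply be excluded. The substantive case is $n_0 = n$, i.e.\ $\gamma_0 = 1$. The two key observations are: (a) by Lemma \ref{lemma:attain} the only $1$-attainable belief is the prior, so $\att(1,\pi) = \{\pi\}$ for every $\pi$, whence the operator $\attain$ acts as the identity and $\attain \qconv v_0 = \qconv v_0$ (this is exactly the remark recorded after Theorem \ref{prop:value}); and (b) $v_0$ is upper semicontinuous, since $\pi \mapsto \BR(\pi)$ is upper hemicontinuous with compact values by the maximum theorem and $u_S$ is continuous, so the maximum defining $v_0$ in \eqref{eqn:vS} is attained and upper semicontinuous.

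Given (b), if $v_0$ is quasi-concave then it already belongs to the class of quasiconcave upper-semicontinuous functions that majorize itself, so its quasiconcave envelope satisfies $\qconv v_0 = v_0$. Combining (a) and this identity with Theorem \ref{prop:value} yields $v^* = \attain \qconv v_0(\pi_0) = \qconv v_0(\pi_0) = v_0(\pi_0)$. Thus no equilibrium payoff strictly exceeds $v_0(\pi_0)$, which is precisely the failure of effective information transmission, completing the contrapositive.

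The argument is essentially an assembly of results already in hand, so there is no single hard computation. The one point that needs care is the identity $\qconv v_0 = v_0$: it relies on $v_0$ being upper semicontinuous, so that $v_0$ itself is an admissible competitor in the pointwise-minimum definition of the quasiconcave envelope; without upper semicontinuity the envelope could differ from $v_0$ even when $v_0$ is quasi-concave. I would therefore record the upper semicontinuity of $v_0$ explicitly (or cite where it is established in the appendix) and invoke it before concluding $\qconv v_0 = v_0$.
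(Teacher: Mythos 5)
Your proof is correct and is essentially the paper's own argument written in contrapositive form: both hinge on the same three ingredients, namely Theorem \ref{prop:value}, the fact that quasi-concavity (plus upper semicontinuity of $v_0$) gives $\qconv v_0 = v_0$, and Lemma \ref{lemma:attain}'s characterization that only the prior is $1$-attainable. Your added care about the degenerate case $n_0=0$ and about upper semicontinuity of $v_0$ (which the paper establishes in Step 1 of the proof of Theorem \ref{prop:value}) tightens points the paper's three-line proof leaves implicit, but the route is the same.
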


Define the \emph{benefit from excess audience} to be the 
difference between $v^*$ and the maximum equilibrium payoff to the sender 
in an alternative environment with $n_0=n$.
%
It follows from Theorem \ref{prop:value} and Lemma \ref{lemma:attain} that
this benefit equals the difference between $v^*$ and $\qconv v_0(\pi_0)$.
Moreover, Assumption \ref{ass:sep} guarantees that both $v^*$ and $\qconv v_0(\pi_0)$ are measured in the same units.

\begin{proposition}\label{prop:gap}
	Under assumptions \ref{ass:tail}--\ref{ass:sep},
	there exist $\bar{\gamma}=\bar{\gamma}(\pi_0,\pi^*) > 0 $ 
	and a nonempty set $P=P(\pi^*,\pi',\gamma_0)\subseteq \Delta\Theta$
	such that the benefit from excess audience
	\begin{enumerate}
	\item is non-negative and non-increasing in $\gamma_0$,
	\item is strictly positive whenever $\pi_0 \in P$ and $\gamma_0<1$, and
	\item equals  $\tilde{u}_S(a^*) - v_0(\pi_0)$ whenever $\gamma_0 \leq \bar\gamma$.
	\end{enumerate}
\end{proposition}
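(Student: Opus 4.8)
The plan is to build on the reduction, established in the discussion preceding the proposition, that the benefit from excess audience equals $\attain\qconv v_0(\pi_0)-\qconv v_0(\pi_0)$, and then to track this quantity as $\gamma_0$ and $\pi_0$ vary. Write $\bar u:=\tilde u_S(a^*)$ for the first-best payoff. Assumptions \ref{ass:max} and \ref{ass:sep} give $\bar u=\max v_0$: by (\ref{eqn:sep}) and the strict monotonicity of $U_S$, the action $a^*$ is $U_S$-maximal, the value $\bar u$ is reached only if the top $n_0$ actions all equal $a^*$, and all receivers can choose $a^*$ optimally precisely when $a^*\in\BR(\pi)$. Thus $\argmax v_0=\{\pi\in\Delta\Theta:a^*\in\BR(\pi)\}$, a fact I will use repeatedly.

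For part (i), non-negativity follows by taking $\pi'=\pi_0$ in the maximization (\ref{eqn:vS-hat}): this is admissible because $\pi_0\in\att(\gamma_0)$ — its density relative to itself is the constant $1\in[\gamma_0,1]$, so Lemma \ref{lemma:attain}(iii) applies with $c^0=1$ — whence $\attain\qconv v_0(\pi_0)\geq\qconv v_0(\pi_0)$. Monotonicity rests on Lemma \ref{lemma:attain}(ii): the constraint $\pi_0(E)\pi(E')\geq\gamma\,\pi_0(E')\pi(E)$ only tightens as $\gamma$ increases, so the sets $\att(\gamma)$ are nested and shrink with $\gamma$. Consequently $\max\{\qconv v_0(\pi'):\pi'\in\att(\gamma_0)\}$ is non-increasing in $\gamma_0$, while $\qconv v_0(\pi_0)$ is independent of it; the difference is non-increasing.

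For part (iii), set $\bar\gamma:=\ushort\nu/\bar\nu\in(0,1]$. By Assumption \ref{ass:extra}, $\pi^*\ll\pi_0$ with $d\pi^*/d\pi_0\in[\ushort\nu,\bar\nu]$, so when $\gamma_0\leq\bar\gamma$ Lemma \ref{lemma:attain}(iii) (with $c^0=\bar\nu$) yields $\pi^*\in\att(\gamma_0)$. Since $a^*\in\BR(\pi^*)$ we have $\qconv v_0(\pi^*)\geq v_0(\pi^*)=\bar u$, and as the constant $\bar u$ is itself a quasiconcave upper bound for $v_0$ we have $\qconv v_0\leq\bar u$ everywhere; hence $\attain\qconv v_0(\pi_0)=\bar u$. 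The benefit therefore equals $\bar u-\qconv v_0(\pi_0)$, i.e.\ the gap between the first-best and the no-excess-audience benchmark, which reduces to the stated $\tilde u_S(a^*)-v_0(\pi_0)$ when $v_0$ is quasiconcave at $\pi_0$, as in the applications of Section \ref{sec:app}.

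Part (ii) is the crux, and the main obstacle is that Proposition \ref{prop:IT} certifies improvement only over the babbling value $v_0(\pi_0)$, whereas a positive benefit demands improvement over the larger quantity $\qconv v_0(\pi_0)$. I would bridge this in two steps. First, on $P$ the random-target strategy of Proposition \ref{prop:IT} $\gamma_0$-attains a belief $\hat\pi$ with $a^*\in\BR(\hat\pi)$; the associated target set has size at least $n\gamma_0=n_0$ and plays $a^*$, so by (\ref{eqn:sep}) the top $n_0$ summands equal $U_S(a^*)$ and the sender secures $\bar u$, giving $\attain\qconv v_0(\pi_0)=\bar u$ exactly as in part (iii). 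Second, $\pi_0\in P$ forces $a^*\notin\BR(\pi_0)$, hence $\pi_0\notin\argmax v_0$. The decisive observation is that $\argmax v_0=\{\pi:a^*\in\BR(\pi)\}$ is convex, being the intersection over $a\in A$ of the half-spaces $\{\pi:\int_\Theta(u_R(a^*,\theta)-u_R(a,\theta))\,d\pi(\theta)\geq0\}$; since the top level set of the quasiconcave envelope is the closed convex hull of that of $v_0$, namely this same convex set, we conclude $\qconv v_0(\pi_0)<\bar u$. Therefore the benefit $\bar u-\qconv v_0(\pi_0)$ is strictly positive whenever $\pi_0\in P$ and $\gamma_0<1$. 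The threshold $\bar\gamma>0$ and the nonempty set $P$ are those furnished by Propositions \ref{prop:first-best} and \ref{prop:IT}, the latter being nonempty precisely because there is an excess audience.
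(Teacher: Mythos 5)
Your proof is correct and, at bottom, takes the same route as the paper: the paper's own proof of Proposition \ref{prop:gap} is three lines long --- part (\emph{i}) from the nestedness of $\att(\gamma)$ in $\gamma$ together with Theorem \ref{prop:value}, part (\emph{iii}) cited as a corollary of Proposition \ref{prop:first-best}, and part (\emph{ii}) cited as a corollary of Proposition \ref{prop:IT} and Assumption \ref{ass:sep}. What you add is precisely what the paper leaves implicit, and it is the delicate part. For (\emph{ii}), Proposition \ref{prop:IT} only certifies a payoff above $v_0(\pi_0)$, whereas positivity of the benefit requires beating the larger quantity $\qconv v_0(\pi_0)$; your bridge --- that $\argmax v_0 = \{\pi : a^*\in\BR(\pi)\}$ is closed and convex, and that the top level set of $\qconv v_0$ is the closed convex hull of $\argmax v_0$, so $\pi_0\in P$ forces $\qconv v_0(\pi_0)< \tilde u_S(a^*)$ --- is exactly the missing step, and it does not appear in the paper's one-line citation. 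Be aware, however, that the level-set identity you invoke is not free: it holds here because $v_0$ is upper semicontinuous and $\Delta\Theta$ is weak*-compact (separate $\pi_0$ from $\overline{\hull}(\argmax v_0)$ by a continuous linear functional, pick $\pi_c\in\{v_0\geq c\}$ on the wrong side of the hyperplane for each $c<\max v_0$, and pass to a convergent subsequence, using upper semicontinuity to place the limit in $\argmax v_0$); without upper semicontinuity it can fail, so it deserves a line of justification rather than assertion. Finally, your treatment of (\emph{iii}) exposes a genuine imprecision in the paper rather than a gap in your argument: given the paper's own definition of the benefit as $v^*-\qconv v_0(\pi_0)$, Proposition \ref{prop:first-best} delivers $\tilde u_S(a^*)-\qconv v_0(\pi_0)$, which agrees with the stated $\tilde u_S(a^*)- v_0(\pi_0)$ only when $\qconv v_0(\pi_0)=v_0(\pi_0)$; the paper's proof commits exactly the conflation that you flag, so your qualified statement is the defensible one.
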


Consider an alternative model in which the sender can use  only public messages. 
She still cares only about the actions of part of her audience.
However, public messages only allow her to persuade either all of the receivers or none of them.
The \emph{benefit from private communication}
is the gap between $v^*$ and the maximum sender equilibrium value 
in this alternative model. 
Without private messages, the only $\gamma_0$-attainable belief is the prior.
Hence, the benefit from private communication coincides with the benefit from excess audience
under Assumption \ref{ass:sep}.

\section{Examples}
\label{sec:app}

\subsection{Targeted political campaigns}
\label{sec:election}
	Social media allows politicians to personalize campaign advertising at a low cost. 
	Suppose that the sender is a politician running for office.
	Each receiver $r$ will either vote for the sender ($a_r=1$) or against her ($a_r=0$).
	The state is either $0$ or $1$,
	and voters share a common prior belief with $p_0 := \pi_0(1) \in(0,1)$.
	The sender knows the state and engages in private cheap talk with each receiver 
	via targeted online ads. 
	
	The sender wins the election if she obtains a super-majority 
	of at least $\gamma\in(0,1)$ of the votes.
	Her utility is $1$ if she wins the election, and $0$ otherwise. 
	The pivotal number of receivers is $n_0 = \min\{ n' \:|\: n' > n\gamma \}$.
	All receivers share the same preferences.
	Receiver $r$ prefers $a_r=1$ if and only if his posterior beliefs satisfy
	$p_r := \pi_r(1) \geq \eta$, where $\eta \in(0,1)$ is a fixed parameter.

	When is victory attainable for the politician?
	She wins the election when the posterior beliefs of at least
	$n_0$ voters satisfy $\pi_r > \eta$.
	From (\ref{eqn:binary-A}),
	there exists a $\gamma_0$-attainable belief satisfying this condition
	if and only if
	\begin{align}\label{eqn:election-attainable}
	\dfrac{\eta}{1-\eta} < \dfrac{1}{\gamma_0} \left(\dfrac{p_0}{1-p_0}\right).
	\end{align}
	In such cases, there exists an equilibrium in which the sender
	always wins the election regardless of the state. 
	The condition is satisfied whenever:
	(\emph{i}) the voters prior attitude towards the sender is positive ($p_0$ is high), 
	(\emph{ii}) the voters have a low bar for the sender ($\eta$ is small), 
	or (\emph{iii}) the sender only needs a small fraction of votes in order to win the election
	($\gamma_0$ is small). 
	For the case $\gamma =1/2$ and $\eta = 1/2$, 
	(\ref{eqn:election-attainable}) reduces to the condition  $\pi_0 > 1/3$
	from the introduction.

\subsection{Financial advice}
\label{sec:bookie}
	The state $\theta_0 \in\{0,1\}$ indicates the winner of a rigged boxing match. 
	The sender is an informed bookie who knows the state
	and would like to persuade the receivers to place large bets.
	However, she is time constrained.
	She can talk with $n$ receivers, but she can handle at most $n_0\leq n$ bets.
	The rest of the bets will be handled by other bookies. 
	The total utility of the sender equals  $V_0 + \eta V_1$, 
	where $V_\theta$ is the total volume of bets on $\theta$ that she handles,
	and $\eta > 1$ is a fixed parameter. 

	Each receiver starts with the same initial wealth $w>0$ and places a bet $a_r\in [-w,w]$.
	A positive bet represents a bet on $\theta_0=1$ 
	whereas a negative bet represents a bet on $\theta_0=0$.
	Bets on different states have different exogenous net returns $\rho_0,\rho_1>0$,
	with $\rho_0\rho_1 < 1$.
	Receivers have logarithmic Bernoulli utility functions.
	For example, a receiver with beliefs $\pi$ that places a bet on $\theta_0 =1$ maximizes 
	\begin{align}
		\int\limits_\Theta \! u_R(a_r,\theta) \,d\pi(\theta)
			= p \log(w+a_r) + (1-p) \log(w-a_r),
	\end{align}
	subject to $a_r\in(0,w]$, where $p = \pi(1)$.
	
	This example deviates slightly from our environment
	because the sender cares about both tails of the distribution of actions.
	However, the conclusion of Theorem \ref{prop:value} still applies.
	The receivers' best response correspondence is given by%
	\begin{align}
		\BR(\pi) = \left\{\begin{array}{ll}
			\dfrac{w}{\rho_1}[(1+\rho_1)p - 1] 
				& \quad\text{if}\quad p \geq \dfrac{1}{1+\rho_1}\\[2ex]
			-\dfrac{w}{\rho_0}[(1+\rho_0)(1-p) - 1] 
				& \quad\text{if}\quad p \leq \dfrac{\rho_0}{1+\rho_0}\\[2ex]
			0 & \quad\text{otherwise}
		\end{array}\right..
	\end{align}	
	The function $\attain v_0$ can be computed by substituting the bounds from (\ref{eqn:binary-B}) into $u_S(\BR(p))$.
	Figure \ref{fig:bookie} illustrates $\qconv v_0$ (left), $\attain v_0$
	 for $\gamma_0 \in\{1/3, 1/10\}$ (middle),
	and  $v^*$ for $\gamma_0 =1/3$ (right).
	The gap between the envelopes in the left and right panels corresponds to 
	both the benefit from private communication
	and the benefit from having more receivers than capacity to take bets.
	This example can also be interpreted as the problem of a time-constrained financial adviser 
	who takes prices as given and sells multiple negatively correlated instruments. 

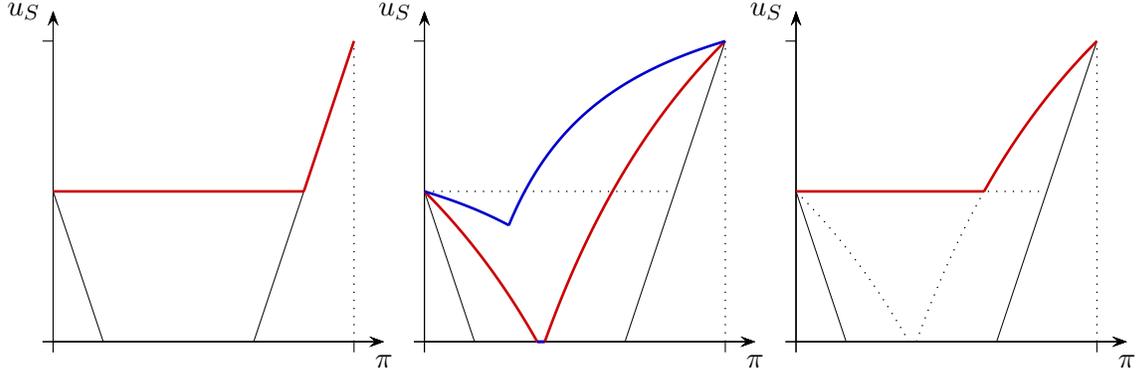
\begin{figure}[t]
	\centering
	\psset{unit=4mm}
	\begin{pspicture}(0,-1)(12,11)
	\small%
	\psaxes[labels=none,tickstyle=bottom,Dx=10,Dy=10]{->}(0,0)(0,0)(11,11)[$\pi$,270][$u_S$,180]
		\psplot{20 3 div}{10}{ 3 x mul   20  sub} 
		\psplot{0}{10 6 div}{ 10 x 6 mul  sub 2 div} 
	\psset{linestyle=dotted,linewidth=0.6pt}
		\psline(10,0)(10,10)
	\psset{linewidth=1pt,linecolor=red!80!black,linestyle=solid}
		\psline(0,5)(! 50 6 div 5)
		\psplot{50 6 div}{10}{ 3 x mul   20  sub} 
	\end{pspicture}
	\begin{pspicture}(0,-1)(12,11)
	\small%
	\psaxes[labels=none,tickstyle=bottom,Dx=10,Dy=10]{->}(0,0)(0,0)(11,11)[$\pi$,270][$u_S$,180]
		\psplot{20 3 div}{10}{ 3 x mul   20  sub} 
		\psplot{0}{10 6 div}{ 10 x 6 mul  sub 2 div} 
	\psset{linestyle=dotted,linewidth=0.6pt}
		\psline(10,0)(10,10)
		\psline(0,5)(! 50 6 div 5)
	\psset{linewidth=1pt,linecolor=red!80!black,linestyle=solid}
		\psplot{0}{30 8 div}{ 10 10 x mul 30 2 x mul sub div  6 mul  sub 2 div} 
		\psplot{20 5 div}{10}{ 3 x 30 mul 2 x mul 10 add div mul   20  sub} 
		\psline(! 30 8 div 0)(! 20 5 div 0)
	\psset{linecolor=blue!80!black}
		\psplot{0}{2.8}{ 10 10 x mul 100 9 x mul sub div  6 mul  sub 2 div} 
		\psplot{2.8}{10}{ 3 x 100 mul 9 x mul 10 add div mul   20  sub} 
		\psline(! 30 8 div 0)(! 20 5 div 0)
	\end{pspicture}
	\begin{pspicture}(0,-1)(11,11)
	\small%
	\psaxes[labels=none,tickstyle=bottom,Dx=10,Dy=10]{->}(0,0)(0,0)(11,11)[$\pi$,270][$u_S$,180]
		\psplot{20 3 div}{10}{ 3 x mul   20  sub} 
		\psplot{0}{10 6 div}{ 10 x 6 mul  sub 2 div} 
	\psset{linestyle=dotted,linewidth=0.6pt}
		\psline(10,0)(10,10)
		\psline(0,5)(! 50 6 div 5)
		\psplot{0}{30 8 div}{ 10 10 x mul 30 2 x mul sub div  6 mul  sub 2 div} 
		\psplot{20 5 div}{10}{ 3 x 30 mul 2 x mul 10 add div mul   20  sub} 
	\psset{linewidth=1pt,linecolor=red!80!black,linestyle=solid}
		\psline(0,5)(! 50 8 div 5)
		\psplot{50 8 div}{10}{ 3 x 30 mul 2 x mul 10 add div mul   20  sub} 
	\end{pspicture}
	\caption{Value functions for the bookie example with $w=0.5$,
		$\eta^2 = 2$, $\rho_1 = 1/2$, and $\rho_0 = 1/5$.
		Left panel: $\qconv v_0$. 
		Center panel: $\attain v_0$ with $\gamma_0=1/3$ (red) and $\gamma_0 = 1/10$ (blue).
		Right panel: $v^*$ with $\gamma_0 = 1/3$.
		}
	\label{fig:bookie}
\end{figure}

\subsection{Crowdfunding}
The sender owns a start-up company financed via an online crowdfunding platform.
The receivers are potential backers. 
Each receiver pledges an investment level $a_r \geq 0$.
Say that the company is \emph{backed} if $\sum_{r\in R} a_r \geq \eta$,
where $\eta>0$ is a fixed parameter.
The sender gets a payoff of $1$ if the company is backed and a payoff of $0$
otherwise. 

A backed company might be a success or a failure. 
The company succeeds with probability  $\theta_0\in[0,1]$ if it is backed,
and it fails for sure otherwise.
The prior belief about $\theta_0$ is uniform on $[0,1]$.
The sender is informed about $\theta_0$
and can communicate with the receivers via private cheap talk. 

If the company is not backed, the pledged investments are refunded. 
If the company is backed and fails, all the investments are lost. 
Otherwise, the investors receive a net return $\rho > 0$.
The receivers have logarithmic Bernoulli utility functions.
Hence, 
\begin{align}
	u_{R}(a_r,\theta) 
		& = -\theta e^{-(w+\rho a_r)} - (1-\theta)e^{-(w-a_r)}
\end{align}
if the company is backed, and $u_{R}(a_r,\theta) = - e^{-w}$  otherwise.%
Lemmas \ref{lemma:attain} and \ref{lemma:ver} can be used 
to construct equilibria in which the sender benefits from having an excess audience. 
Note that the company will be backed if at least $n'$ receivers to pledge at least $\eta/n'$. 
If  $\eta/n' < 1$, a receiver who is optimistic enough about $\theta_0$
would be willing to pledge that amount. 
Optimistic beliefs can be $({n'}/{n})$-attained as long as $n$ is sufficiently larger than $n'$.

\begin{proposition}\label{prop:kickstarter}
	If $n \geq 12 \eta / w \rho^2$,
	then the crowdfunding game has a perfect Ba\-ye\-sian equilibrium 
	in which the company is backed for sure regardless of the state.
\end{proposition}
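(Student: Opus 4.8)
The plan is to build an explicit perfect Bayesian equilibrium using the random-target-audience construction behind Lemmas~\ref{lemma:attain} and~\ref{lemma:ver}, rather than quoting Theorem~\ref{prop:value}: here the sender's payoff (an indicator of whether $\sum_r a_r \geq \eta$) is neither of the order-statistic form in Assumption~\ref{ass:tail} nor separable, so the general characterization does not apply, but the attainability machinery does. Fix a target size $n'$ and a per-capita pledge $\delta \leq w$ with $n'\delta \geq \eta$; the natural choice that exposes the role of $w$ is $\delta = w$ and $n' = \lceil \eta/w\rceil$, so that $n'$ backers each investing their full wealth exactly clear the threshold. I would have the sender privately draw a target set $T$ with $\lvert T\rvert = n'$, send the ``optimistic'' message to every $r\in T$ in all states, and send it to the remaining receivers only in favorable states. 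By Lemma~\ref{lemma:attain} together with the uniform-prior computation in Example~\ref{eg:cont}, this $(n'/n)$-attains a belief $\pi^*$ whose mean $\int_0^1 \theta\,d\pi^*(\theta)$ can be pushed up to $1/(1+\sqrt{n'/n})$, and Definition~\ref{def:attainable}(ii) guarantees that at least $n'$ receivers hold $\pi^*$ in \emph{every} state. Hence the firm is backed with certainty once each targeted backer pledges at least $\delta$.

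The central computation is the backers' willingness to pledge. Because $u_R$ is linear in $\theta$, a receiver's payoff conditional on the firm being backed depends only on the posterior mean $p$, and maximizing $-p\,e^{-(w+\rho a)} - (1-p)\,e^{-(w-a)}$ over $a \geq 0$ gives the interior optimum $a^*(p) = \tfrac{1}{1+\rho}\log\!\bigl(\tfrac{p\rho}{1-p}\bigr)$ whenever $p > 1/(1+\rho)$. Evaluated at the most optimistic attainable belief, where $p/(1-p) = \sqrt{n/n'}$, this is $a^*(\pi^*) = \tfrac{1}{1+\rho}\log\!\bigl(\rho\sqrt{n/n'}\bigr)$. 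Requiring $a^*(\pi^*) \geq \delta = w$ rearranges to $n \geq (n'/\rho^2)\,e^{2(1+\rho)w}$, and with $n' \asymp \eta/w$ this is implied by the hypothesis $n \geq 12\eta/(w\rho^2)$, the constant $12$ absorbing the exponential factor and the integer rounding in $n'$. Thus each targeted backer's privately optimal pledge already meets the per-capita requirement, and the threshold is cleared regardless of the state.

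To finish, I would verify the equilibrium conditions. Bayes consistency of the posteriors is exactly the content of Lemma~\ref{lemma:attain}. For the receivers, observe that pledging $0$ while the firm is backed yields $-e^{-w}$, the same payoff as the unbacked outcome; hence a targeted receiver with belief $\pi^*$ strictly prefers her positive optimal pledge to any deviation that would leave the firm unbacked, and the non-targeted (pessimistic) receivers optimally pledge $0$ and are content. The expectation ``the firm is backed'' is therefore self-fulfilling, and no single backer has a profitable pivotal deviation. The sender obtains her maximal payoff $1$ in every state, so she too has no profitable deviation.

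The hard part will be the quantitative reconciliation in the second paragraph. The optimism attainable at fraction $n'/n$ decays like $\sqrt{n'/n}$, forcing $n \gg n'$, while covering the threshold with feasible pledges ($\delta \leq w$, $n'\delta \geq \eta$) forces $n' \gtrsim \eta/w$; choosing $n'$ to balance these opposing forces, and bounding the exponential and the ceiling by the single constant $12$, is precisely where the stated inequality originates. One should also take care that the crowdfunding action set is one-sided ($a_r \geq 0$), unlike the two-sided bookie variant, which is exactly what makes a single optimistic attainable posterior $\pi^*$ sufficient to drive the pledges, and that the off-path message profiles are assigned beliefs consistent with the same $(n'/n)$-attaining rule.
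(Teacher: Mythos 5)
Your overall architecture matches the paper's: a random target audience of size $n'$, the attainability machinery of Lemma \ref{lemma:attain} with the uniform-prior computation of Example \ref{eg:cont} (maximal attainable posterior mean $1/(1+\sqrt{\gamma})$), and the observation that pledging is individually rational whether or not a backer is pivotal. The gap is in the central quantitative step, and it is fatal as written. You require each targeted backer to pledge $\delta=w$, which with your best-response formula $a^*(p)=\tfrac{1}{1+\rho}\log\bigl(\tfrac{p\rho}{1-p}\bigr)$ forces $n \geq (n'/\rho^{2})\,e^{2(1+\rho)w}$. The factor $e^{2(1+\rho)w}$ is unbounded in $w$, so it cannot be ``absorbed'' by the constant $12$: for $w$ larger than roughly $\log(12)/(2(1+\rho))$ the hypothesis $n\geq 12\eta/(w\rho^{2})$ simply does not imply your inequality. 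Part of the trouble is that you maximized the displayed exponential expression for $u_R$, whereas the paper's proof (consistent with the text ``logarithmic Bernoulli utility'') works with $\theta_\pi\log(w+\rho a_r)+(1-\theta_\pi)\log(w-a_r)$, whose best response $\BR(\pi)=\max\bigl\{0,\tfrac{w}{\rho}(\theta_\pi(1+\rho)-1)\bigr\}$ is \emph{proportional to} $w$; that proportionality is exactly what puts $w$ into the bound $12\eta/(w\rho^{2})$. Under the exponential specification, pledges grow only logarithmically in posterior odds and the proposition's stated threshold is in fact insufficient for large $w$, so no repair of your computation can reach the stated conclusion along that route.

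There is a second, independent design error: even under logarithmic utility, your choice $\delta=w$, $n'=\lceil\eta/w\rceil$ is infeasible, because $\log(w-a_r)\to-\infty$ as $a_r\to w$, so no posterior ever induces a full-wealth pledge. The paper resolves the tension you flag in your last paragraph the opposite way: instead of minimizing the target size and demanding maximal pledges, it takes a \emph{large} target, $n'=\min\{k\in\Natural \:|\: k\geq n\rho^{2}/4\}$, so $\sqrt{\gamma}\approx\rho/2$, and each targeted backer then optimally pledges about $\tfrac{w}{\rho}\bigl(\tfrac{1+\rho}{1+\rho/2}-1\bigr)=\tfrac{w\rho}{2+\rho}$, a fixed fraction of wealth. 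Total investment is then at least $\tfrac{n\rho^{2}}{4}\cdot\tfrac{w\rho}{\rho(2+\rho)}>\tfrac{nw\rho^{2}}{12}\geq\eta$ under the hypothesis (using $\rho<1$; the case $\rho\geq 1$ is handled separately by a babbling equilibrium, which your argument also omits). In short: right scaffolding, but the pledge-size/target-size balance must be inverted, and the receiver objective must be the logarithmic one, for the stated constant to emerge.
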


%

\section{Extensions}
\label{sec:ext}

\subsection{Information design}
Suppose now that the  sender chooses and commits to a communication strategy 
\emph{before} learning the state of nature. 
This timing corresponds to the information design paradigm used by \cite{KG}.
See also \cite{BM}.
Define a \emph{commitment protocol} to be  a tuple $(\bm\alpha,\bm\beta,\mu)$ satisfying 
	conditions (\emph{i}) and (\emph{ii}) in the definition of an equilibrium.
The sender's \emph{maximum commitment payoff} $v^{**}(p)$ is  
the maximum utility that the sender can obtain in any commitment protocol.

Since every equilibrium is a commitment protocol, Propositions \ref{prop:first-best} and \ref{prop:IT} continue to hold in the game with commitment. 
Also, it is possible to obtain a geometric characterzation of $v^{**}$.
Let $\conv v_0$ denote the concave envelope of $v_0$,
that is, the pointwise-minimum, concave function that majorizes $v_0$.

\begin{theorem}\label{prop:commitment}
	If $\|\Theta\|<+\infty$ and Assumption \ref{ass:sep} holds,
	then $v^{**} = \attain \conv v_0(\pi_0)$.%
	\footnote{%
		The proof is based on Proposition 1 in \cite{KG}, which assumes a finite state space. 
		The discussion in Section 3 of their online appendix suggests 
		that it might be possible to extend the result to compact separable state spaces. }
\end{theorem}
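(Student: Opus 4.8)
The plan is to prove the two inequalities $v^{**}\leq\attain\conv v_0(\pi_0)$ and $v^{**}\geq\attain\conv v_0(\pi_0)$ separately, following the architecture of the proof of Theorem \ref{prop:value} but substituting the concavification of \cite{KG} for the quasiconcavification step. This is exactly where the hypothesis $\|\Theta\|<+\infty$ enters: it lets me invoke Proposition 1 of \cite{KG}, so that the single-audience commitment value from any belief $\pi$ equals $\conv v_0(\pi)$ and is achieved by a Bayes-plausible distribution supported on finitely many posteriors (Carath\'eodory). Throughout I would use the favorable-action bound $U_S(a_r)\leq v_0(\beta_r(m_r))$, valid for every best response, so that I may argue directly in terms of receivers' posteriors and the value function rather than actions.

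For the lower bound I would build an explicit commitment protocol satisfying only conditions (\emph{i}) and (\emph{ii}), which is all a commitment protocol requires. Choose $\pi^\ast\in\att(\gamma_0,\pi_0)$ attaining $\attain\conv v_0(\pi_0)=\conv v_0(\pi^\ast)$; the maximum exists because $\att(\gamma_0,\pi_0)$ is a nonempty compact polytope by Lemma \ref{lemma:attain} and $\conv v_0$ is upper semicontinuous. By \cite{KG} and Carath\'eodory, write $\conv v_0(\pi^\ast)=\sum_k\lambda_k v_0(\pi_k)$ with $\sum_k\lambda_k\pi_k=\pi^\ast$. The protocol has two layers: the first is the random-target-audience device of Example \ref{eg:binary} (justified in general by Lemmas \ref{lemma:attain} and \ref{lemma:ver}), which guarantees that $n_0$ receivers hold belief $\pi^\ast$ in every state; the second further splits $\pi^\ast$ into the posteriors $\pi_k$ with frequencies $\lambda_k$, which is Bayes-consistent precisely because $\sum_k\lambda_k\pi_k=\pi^\ast$. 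Designing the non-target messages to induce weakly lower actions makes the top $n_0$ order statistics come from the target group, so the expected average of $U_S$ over them is $\sum_k\lambda_k v_0(\pi_k)=\conv v_0(\pi^\ast)$, giving $v^{**}\geq\attain\conv v_0(\pi_0)$.

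For the upper bound I would take an arbitrary commitment protocol. By Assumption \ref{ass:sep} the sender's payoff is the expected average of $U_S$ over the highest $n_0$ actions, which the favorable-action bound dominates by the expected average of $v_0$ (hence of $\conv v_0$) over the $n_0$ receivers whose posteriors carry the highest value. The goal is to show that this quantity cannot exceed $\attain\conv v_0(\pi_0)$, reducing---exactly as in Theorem \ref{prop:value}, now with the concave rather than the quasiconcave envelope---to the attainability characterization: the restriction in Lemma \ref{lemma:attain}(\emph{ii}) is what bounds how far the state-by-state selection of the pivotal group, together with the concavification it is allowed to perform, can push the pivotal value away from the prior, and Lemma \ref{lemma:ver} supplies the extreme points against which this is checked.

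The hard part is this last step. A na\"ive aggregation of the selected posteriors into a single representative belief followed by Jensen is too lossy, because a committed sender can let the \emph{identity} of the pivotal receivers rotate with the state and can truthfully reveal to whichever receivers the truth favors---a freedom absent from the single-receiver problem of \cite{KG} and absent from cheap talk, where such revelation is not credible. Controlling these heterogeneous, state-dependent posteriors and showing that their effect is still captured by $\attain\conv v_0(\pi_0)$ is the delicate part of the argument, and it is where finiteness of $\Theta$, the extreme-point description of Lemma \ref{lemma:ver}, and the concavification of \cite{KG} must be combined carefully. The remaining pieces---the favorable-action reduction, the two-layer construction, and the optimization over $\pi^\ast\in\att(\gamma_0,\pi_0)$---are routine given Lemmas \ref{lemma:attain}--\ref{lemma:ver} and Proposition 1 of \cite{KG}.
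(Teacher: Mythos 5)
Your lower bound is essentially the paper's own argument: pick $\hat\pi\in\att(\gamma_0,\pi_0)$ with $\conv v_0(\hat\pi)=\attain\conv v_0(\pi_0)$, decompose $\hat\pi$ by Carath\'eodory so that $\conv v_0(\hat\pi)=\sum_k\lambda_k v_0(\pi_k)$ with $\sum_k\lambda_k\pi_k=\hat\pi$ (this, via Proposition 1 of \cite{KG}, is the only place where $\|\Theta\|<+\infty$ is used), and layer that splitting on top of the $\gamma_0$-attaining strategy from Lemma \ref{lemma:attain}. Your extra care about forcing non-target actions to be low is unnecessary---since $U_S$ is increasing, higher non-target actions can only raise the average of $U_S$ over the top $n_0$ order statistics---but this is harmless.

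The genuine gap is your upper bound: you stop exactly at the decisive step, and you justify stopping by asserting that the step which actually closes the argument---aggregating the pivotal posteriors and applying Jensen---is ``too lossy.'' It is not, and the paper's proof consists of carrying it out. Symmetrize the protocol (the shuffling argument of Lemma \ref{lemma:symmetric} applies verbatim to commitment protocols, since it preserves conditions (\emph{i}) and (\emph{ii})); because there is no sender incentive constraint, take $U_S(\alpha_r(m_r,\omega))=v_0(\beta_r(m_r))$ without loss of generality; and for each message profile $\mess$ in the support let $R(\mess)$ consist of $n_0$ receivers with the largest values $v_0(\beta_r(m_r))$. Then
\begin{align*}
	u_S^{*}
		\;=\; \Exp{\dfrac{1}{n_0}\sum_{r\in R(\mess)} v_0\big(\beta_r(m_r)\big)}
		\;\leq\; \Exp{\conv v_0\big(\bar\pi(\mess)\big)}
		\;\leq\; \conv v_0(\bar\pi),
\end{align*}
where $\bar\pi(\mess)=\frac{1}{n_0}\sum_{r\in R(\mess)}\beta_r(m_r)$ and $\bar\pi=\Exp{\bar\pi(\mess)}$: the first inequality uses $\conv v_0\geq v_0$ and concavity of $\conv v_0$, the second is Jensen. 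Both are upper bounds, which is all that is required, so nothing is ``lost.'' The state-dependent rotation of pivotal identities that worries you is not an obstruction to this chain; it is absorbed into the one remaining claim, namely that the aggregated pivotal belief $\bar\pi$ is $\gamma_0$-attainable, which the paper establishes with a coarsened two-message strategy analogous to $\mu'$ in the proof of Theorem \ref{prop:value} (flag the receivers in $R(\mess)$), together with symmetry and the martingale property of Bayesian updating; this yields $u_S^{*}\leq\conv v_0(\bar\pi)\leq\attain\conv v_0(\pi_0)$. Your instinct that the state-dependent selection is the crux is thus partially sound---attainability of $\bar\pi$ is precisely what must be argued---but you neither give that argument nor any substitute for it, and your stated reason for abandoning the route is wrong. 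Your appeals to Lemma \ref{lemma:ver} and to finiteness of $\Theta$ in this direction are also misplaced: neither appears in the paper's upper bound, finiteness being needed only for the \cite{KG} step in the lower bound. As written, the proposal does not establish $v^{**}\leq\attain\conv v_0(\pi_0)$.
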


Figure \ref{fig:election} illustrates this result for the election example
with $\gamma_0=\eta=1/2$.
The figure on the left panel shows $\conv v_0$,
which corresponds to the maximum commitment value without an excess audience \citep{KG}.
This would also be the maximum commitment value for the sender if she was restricted 
to use only public messages.
The middle panel shows $v^* = \attain v_0$.
The right panel shows $v^{**}$.
The benefit from commitment is thus given by the gap between the right and the middle panel. 
The benefit from an excess audience under commitment is given by the gap between the right and the left panel.

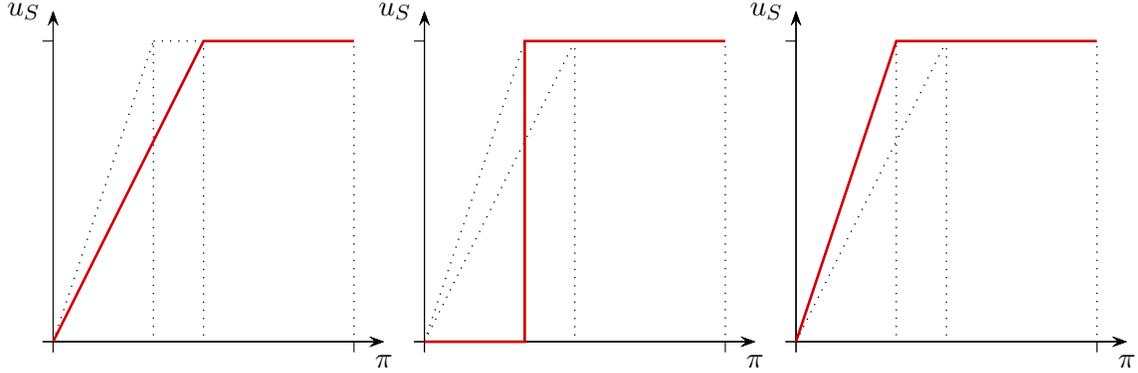
\begin{figure}[t]
	\centering
	\psset{unit=4mm}
	\begin{pspicture}(0,-1)(12,11)
	\small%
	\psaxes[labels=none,tickstyle=bottom,Dx=10,Dy=10]{->}(0,0)(0,0)(11,11)[$\pi$,270][$u_S$,180]
	\psset{linestyle=dotted,linewidth=0.6pt}
		\psline(! 10 3 div 10)(10,10)(10,0)
		\psline(0,0)(! 10 2 div 10)(! 10 2 div 0)
		\psline(0,0)(! 10 3 div 10)(! 10 3 div 0)
	\psset{linewidth=1pt,linecolor=red!80!black,linestyle=solid}
		\psline(0,0)(! 10 2 div 10)(10,10)		
	\end{pspicture}
	\begin{pspicture}(0,-1)(12,11)
	\small%
	\psaxes[labels=none,tickstyle=bottom,Dx=10,Dy=10]{->}(0,0)(0,0)(11,11)[$\pi$,270][$u_S$,180]
	\psset{linestyle=dotted,linewidth=0.6pt}
		\psline(! 10 3 div 10)(10,10)(10,0)
		\psline(0,0)(! 10 2 div 10)(! 10 2 div 0)
		\psline(0,0)(! 10 3 div 10)(! 10 3 div 0)
	\psset{linewidth=1pt,linecolor=red!80!black,linestyle=solid}
		\psline(0,0)(! 10 3 div 0) (! 10 3 div 10)(10,10)
	\end{pspicture}
	\begin{pspicture}(0,-1)(11,11)
	\small%
	\psaxes[labels=none,tickstyle=bottom,Dx=10,Dy=10]{->}(0,0)(0,0)(11,11)[$\pi$,270][$u_S$,180]
	\psset{linestyle=dotted,linewidth=0.6pt}
		\psline(! 10 3 div 10)(10,10)(10,0)
		\psline(0,0)(! 10 2 div 10)(! 10 2 div 0)
		\psline(0,0)(! 10 3 div 10)(! 10 3 div 0)
	\psset{linewidth=1pt,linecolor=red!80!black,linestyle=solid}
		\psline(0,0) (! 10 3 div 10)(10,10)
	\end{pspicture}
	\caption{Commitment value for election example with $\gamma_0=\eta=1/2.$}
	\label{fig:election}
\end{figure}

%
%
%
%

\subsection{Lack of transparent motives}
Assuming that the sender does not care about the state 
simplifies the analysis and plays a crucial role in the proof Theorem \ref{prop:value}.
However, an excess audience can still be beneficial even
when the sender's motives are not transparent. 
Consider the classic quadratic-loss game from \cite{CS},
with the twist that the sender faces an excess audience. 

The state is distributed uniformly on $[0,1]$.
The receivers take actions in $A=[0,1]$,
and their utility is $u_R(a_r,\theta_0) = -(\theta_0-a_r)^2$.
The sender's utility is given by 
\begin{align}\label{eqn:CSu}
	u_S(\act,\theta_0) = -\dfrac{1}{n_0}\sum_{i=1}^{n_0} \left( \theta_0 + b - a^{(n+1-i)} \right)^2,
\end{align}
where $b>1/4$ is a fixed parameter measuring the \emph{bias} of the sender
relative to the receivers. 
Suppose that the sender is only allowed to use private messages. 

When $n_0$ is equal to the total number of receivers,
this is a particular instance of the environment studied by \cite{mariagreg}.
Since all the receivers are biased in the same direction,
there are only babbling equilibira. 
In contrast, when $n$ is much larger than $n_0$, 
there can be effective information transmission.

\begin{proposition}\label{prop:CS}
	For all $\epsilon > 0$, 
	there exists $\ushort{n}<\infty$ such that
	whenever $n \geq \ushort{n}$
	the maximum sender equilibrium value 
	in the quadratic-loss game is greater than $-\epsilon$
	if $\theta_0+b \leq 1$,
	and greater than $-b^2-\epsilon$ otherwise. 
\end{proposition}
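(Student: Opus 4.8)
The plan is to exhibit, for each large $n$, an explicit perfect Bayesian equilibrium built from a random target audience and to bound its state-by-state value. Because the sender's motives are not transparent, Theorem \ref{prop:value} does not apply, so the argument must be a direct construction in the spirit of Example \ref{eg:binary} rather than an appeal to the envelope characterization. I would begin by recording the \emph{feasibility ceiling}: since every action lies in $[0,1]$ and the sender's per-receiver bliss point in state $\theta$ is $\theta+b$, the closest feasible action is $\min\{\theta+b,1\}$, so in \emph{any} strategy profile the sender's state-$\theta$ payoff is at most $0$ when $\theta+b\le 1$ and at most $-(\theta+b-1)^2\ge -b^2$ otherwise. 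This ceiling is both the target the construction must approach and, crucially, what will make incentive compatibility painless.

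Next I would build the messaging. Set $\gamma_0=n_0/n$, and in state $\theta$ let the sender draw a uniformly random target set $T$ of size $n_0$, so a representative receiver lies in $T$ with probability $\gamma_0$ and cannot tell his role. Indexing private messages by a parameter $x$, I would pool so that message $x$ is sent to target receivers in state $\theta=x$ and to non-target receivers in state $\theta=x+b/(1-\gamma_0)$ whenever that state is at most $1$. A short Bayes computation then gives posterior mean $\gamma_0 x+(1-\gamma_0)\big(x+\tfrac{b}{1-\gamma_0}\big)=x+b$, so message $x$ induces action exactly $x+b$; hence a target receiver in state $\theta$ takes action $\theta+b$, its bliss point, whenever $\theta+b\le 1$. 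States that cannot be covered this way — the low states with $x<0$, a boundary layer of width $O(\gamma_0)$ near $\theta+b=1$, and the infeasible region $\theta+b>1$ — I would handle separately: non-target receivers there use an auxiliary Bayes-consistent (babbling or revealing) rule chosen so that no message induces an action above $1-O(\gamma_0)$, and target receivers with infeasible bliss are sent the highest-action message, yielding action $1-O(\gamma_0)$ and loss at most $b^2+O(\gamma_0)$.

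I would then verify the profile is an equilibrium and compute its value. Since target actions equal $\theta+b$ (or $\approx 1$) while non-target actions are $\approx\theta$, the target actions are the top $n_0$ order statistics deterministically, so the sender's realized payoff equals the average quadratic loss of the target receivers — essentially $0$ on the covered feasible region and at most $b^2+O(\gamma_0)$ on the infeasible region. Incentive compatibility follows from the ceiling: on the feasible region the sender already attains her global bliss, so no deviation can help; on the infeasible region she is sending the highest-action message and every inducible action is below $\theta+b$, so the closest is the one she chooses, while she is indifferent over all non-target messages. Assigning off-path messages the belief that the state is $0$ deters deviations to them. Finally, given $\epsilon>0$ I would choose $\underline{n}$ so that $\gamma_0=n_0/n$ is small enough that all the $O(\gamma_0)$ and $O(\gamma_0^2)$ errors fall below $\epsilon$, yielding payoff greater than $-\epsilon$ when $\theta+b\le 1$ and greater than $-b^2-\epsilon$ otherwise.

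The main obstacle is the sender's incentive compatibility under state-dependent (non-transparent) preferences: in the Crawford--Sobel benchmark it is precisely the sender's deviation incentives that collapse communication to babbling for $b>1/4$, so the crux is showing that the excess audience neutralizes them. The resolution is the feasibility-ceiling observation — the construction attains the pointwise payoff ceiling up to $o(1)$, so no deviation can pay — but making this rigorous requires the delicate bookkeeping of verifying exact Bayes consistency of the two-point pooling across the continuum of states and controlling the boundary layers where that clean pooling breaks down.
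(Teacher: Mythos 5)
Your high-level strategy is the same as the paper's: randomly split the audience into a target set of size $n_0$ and its complement, give (approximately) truthful information to non-targets, send targets whatever message induces the action closest to the sender's bliss point $\theta_0+b$, and get sender incentive compatibility from the fact that the construction attains the pointwise payoff ceiling. Your interior treatment is genuinely different and quite elegant: the ``diagonal'' two-point pooling (message $x$ sent to targets in state $x$ and to non-targets in state $x+b/(1-\gamma_0)$) delivers the sender \emph{exactly} her bliss point on the covered region, whereas the paper's finite partition into $K$ blocks only gets within $O(1/K)$. However, there is a genuine gap in your handling of the upper boundary, and it is not mere bookkeeping. The target receivers in states $\theta\in\bigl(1-b/(1-\gamma_0),\,1\bigr]$ form a set of \emph{fixed positive measure} (roughly $b>1/4$). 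If you send them all ``the highest-action message,'' that message acquires an atom of probability $\approx\gamma_0 b$, while every pooled message carries only density mass; Bayes' rule then makes that message's posterior essentially uniform on the infeasible region, so its induced action is $\approx 1-b/2$, not $1-O(\gamma_0)$. This breaks the payoff bound (at $\theta_0=1$ the loss is $\approx(3b/2)^2>b^2$) and, worse, breaks sender incentive compatibility: in infeasible states she strictly prefers messages just below $x^{\max}=1-b/(1-\gamma_0)$, which still induce actions near $1$, so the proposed profile is not an equilibrium (indeed no best response exists on the open set of such messages). Spreading the atom over a band of messages of width $\delta\gg\gamma_0$ restores approximate Bayes consistency, but then the messages in the band induce \emph{different} actions and the sender again deviates to the top of the band --- in the continuum this fixed-point problem has no clean resolution.

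This tension is exactly what the paper's discretization is for. With a finite partition of size $K$ held fixed as $n\to\infty$, every message $k$ retains truthful non-target mass $(1-\gamma_0)/K$ bounded away from zero, so concentrating \emph{all} the infeasible-state lies on the top block perturbs its posterior mean by only $O(\gamma_0)\to 0$; the top block therefore keeps the highest induced action and remains the sender's best choice, making the profile self-consistently an equilibrium, at the cost of an $O(1/K)$ (rather than zero) loss on the covered region, which is absorbed into $\epsilon$. To close your gap you would need to do the same --- replace the continuum of messages by finitely many, so that each message's ``clean'' probability dominates the contamination --- at which point your argument essentially becomes the paper's proof.
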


The proof is constructive. 
The sender uses a strategy based on a finite partition of $[0,1]$.
She randomly splits the receivers into two groups of sizes 
$n-n_0$ and $n_0$.
She then reveals truthfully which block of the partition contains 
$\theta_0$ to the members of the first group. 
She misleads the members of the second group 
so that they choose her preferred action.
When $n$ is very large, it is possible to construct 
incentive compatible equilibria of this sort with very fine partitions. 

\section{Closing remarks}
When talk is cheap, 
information transmission requires the 
sender to be indifferent between all messages she uses. 
The present work identifies a novel mechanism that can create indifference. 
When the sender only cares about persuading a strict subset of her audience, 
she is indifferent between the messages she sends to the rest of the receivers. 
It is possible for her to gain credibility by being truthful 
with some receivers while lying to others. 
This mechanism can greatly facilitate information transmission 
and increase the sender's power to persuade.

The present work provides a full characterization of the
sender-optimal equilibrium assuming that 
the receivers do not care about each other actions, 
the sender has transparent motives, 
and her preferences are monotone and satisfy a separability condition. 
These restrictions greatly simplify the analysis.
They make it possible to characterize the set of equilibria 
combining the idea of $\gamma_0$-attainability
with the techniques from \cite{lipnowski2017}.
However, they appear to be inessential for many of the results. 
The value of having a large audience in general settings is left as an open problem. 

\bibliography{references}
\appendix
\section{Proofs}
\label{sec:proofs}

\subsection{Attainable posteriors}
\begin{proof}[Proof of Lemma \ref{lemma:attain}]
	If $\gamma =1$, then the only belief that satisfies either (\emph{i}), (\emph{ii})
	or (\emph{iii}) is $\pi_0$, and it satisfies all three conditions. 
	Hence, for the rest of the proof, suppose that $\gamma \in (0,1)$.

((\emph{i}) $\Rightarrow$ (\emph{ii}))
	Suppose that $\pi$ is $\gamma$-attained by some $\bm\beta$ and $\mu$, 
	and fix any two events $E'$ and $E''$.
	For each receiver $r$, let $M_r$ be the set of compound messages $m$ such that $\beta_r(m) = \pi$,
	and let $\chi_r$ be the indicator that $m_r\in M_r$.
	Also let $\chi = \sum_{r\in R} \chi_r$ be the number of receivers whose posterior equals $\pi$.
	The prior beliefs $\pi_0$, communication strategy $\mu$, and updating rule $\bm\beta$
	induce a joint probability measure $\Pr$ over $\theta_0$, $m_r$, $\chi_r$ and $\chi$.

	First, suppose that  $\pi_0(E')\pi_0(E'') = 0$. 
	Since $\gamma>0$, there exists at least one receiver $r$ such that $\Pr(M_r)>0$.
	This implies that $\pi$ is obtained from $\pi_0$ using Bayes' rule.
	Therefore, $\pi\ll \pi_0$ and (\emph{iii}) holds with equality.
		
	Now consider the case $\pi_0(E')\pi_0(E'') > 0$.
	Note that
	\begin{align}\label{eqn:proof-lemm-A}
		\Exp{\chi|E'}
			= \sum_{r\in R} \Exp{\chi_r|E'} 
			=  \sum_{r\in R} \Pr(M_r|E'),
	\end{align}
	where I am using short notation for the events $\theta_0 \in E'$
	and $m_r\in M_r$.
	Since $p'$ is $\gamma$-attained by $x$ and $\bm\beta$, it follows that 
	\begin{align}\label{eqn:proof-lemm-B}
		\Pr(\chi\geq \gamma n|E') = 1 
		\qquad\Rightarrow\qquad \Exp{\chi|E'}\geq \gamma n.
	\end{align}		
	Combining (\ref{eqn:proof-lemm-A}) and (\ref{eqn:proof-lemm-B}) it follows that
	\begin{align}
		\dfrac{1}{n} \sum_{r\in R} \Pr(M_r|E') 
		\geq \gamma.
	\end{align}
	Therefore there exists receiver $r'$
	such that $\Pr(M_{r'}|E')  \geq \gamma$
	and $\Pr(M_{r'})>0$.

	It follows from Bayes' rule that
	\begin{align}
		\pi(E') 
			= \dfrac{\pi_0(E') \Pr(M_{r'}|E')}{\Pr(M_{r'}) }
		\quad\text{and}\quad
		\pi(E'') 
			= \dfrac{\pi_0(E'') \Pr(M_{r'}|E'')}{\Pr( M_{r'}) }.
	\end{align}
	Taking the ratio of these equations yields:
	\begin{align}
		\dfrac{\pi(E')}{\pi(E'')}
			= \dfrac{\pi_0(E')}{\pi_0(E'')} \cdot \dfrac{\Pr(M_{r'}|E')}{\Pr(M_{r'}|E'')}
			\geq  \dfrac{\pi_0(E')}{\pi_0(E'')}\cdot\dfrac{\gamma}{1},
	\end{align}
	which is equivalent to (\emph{ii}) after rearranging terms. 

((\emph{ii}) $\Rightarrow$ (\emph{iii}))
	Now suppose that (\emph{ii}) and take any event $E$ such that $\pi_0(E)=0$,
	and thus $\pi_0(\Omega\setminus E) = 1$.
	It follows that 
		$\gamma \pi(E) \pi_0(\Omega\setminus E) \leq \pi(\Omega\setminus E) \pi_0(E) = 0$.
	Since $\gamma>0$, this implies that $\pi(E)=0$.
	Since $E$ was arbitrary, this implies that $\pi \ll \pi_0$.
	Hence, $d\pi/d\pi_0:\Theta\to[0,\infty)$ exists by the Radon-Nikodym theorem. 
	
	Let $\bar\nu,\ushort\nu\in \Real_+ \cup\{\infty\}$ denote the supremum and infimum 
	of the support of $dp'/d\pi_0 (\theta_0)$, respectively, i.e., 
	\begin{align}\label{eqn:supRN}
		\bar\nu = \sup\left\{ \nu \:\big|\: 
			\pi_0\left( \dfrac{d\pi}{d\pi_0}(\theta) \geq \nu \right) > 0 \right\} 
	\end{align}
	and
	\begin{align}\label{eqn:infRN}
		\ushort\nu = \inf\left\{ \nu \:\big|\: 
			\pi_0\left( \dfrac{d\pi}{d\pi_0}(\theta) \leq \nu \right) > 0 \right\}.
	\end{align}
	By construction, $ 0\leq \ushort\nu < \infty$, $\ushort\nu\leq \bar\nu$,
	and $d\pi / d\pi_0  \in [\ushort\nu,\bar\nu]$ a.s..
	If $\bar\nu = \ushort\nu$, then (\emph{iii}) is satisfied by $c_0 = \bar\nu$.
	Otherwise, there exists $\nu_1,\nu_2\in \Real$ such that 
	$\ushort\nu < \nu_1 < \nu_2 < \bar\nu$. 
	I will show that if (\emph{ii}) holds, 
	then  $\gamma \nu_2 \leq \nu_1$ for any such pair.
	Therefore, $\gamma \bar\nu \leq \ushort\nu$
	and (\emph{iii}) is satisfied by $c_0 = \bar\nu$.

	Let $E_1 = \{\theta \:|\: d\pi/d\pi_0 (\theta) \leq \nu_1\}$ 
	and $E_2 = \{\theta \:|\: d\pi/d\pi_0 (\theta) \geq \nu_2\}$.
	Since $d\pi/d\pi_0$ is measurable, so are $E_1$ and $E_2$.
	The way $\bar\nu$ and $\ushort\nu$ were defined implies that $\pi_0(E_1)>0$ and $\pi_0(E_2)>0$. 
	It follows from the Radon-Nikodym theorem  that
	\begin{align}\label{eqn:proof-lemm-C}
		\pi(E_1)
			& = \int\limits_{E_1} \dfrac{d\pi}{d\pi_0}(\theta) \,d\pi_0(\theta) 
			  \leq \int\limits_{E_1} \nu_1 \,d\pi_0(\theta|E_1) = \nu_1 \pi_0(E_1).
	\end{align}
	By a similar argument, it follows that 
	\begin{align}\label{eqn:proof-lemm-D}
		\pi(E_2) \geq \nu_2 \pi_0(E_2).
	\end{align}
	The fact that $\nu_2 < \bar\nu$ implies that  $\pi_0(E_2)>0$. 
	Hence, we can divide (\ref{eqn:proof-lemm-C}) by (\ref{eqn:proof-lemm-D}).
	Doing so yields
	\begin{align}\label{eqn:proof-lemm-D}
		\dfrac{\nu_1}{\nu_2} \dfrac{\pi_0(E_1)}{\pi_0(E_2)} 
			\geq \dfrac{\pi(E_1)}{\pi(E_2)} \geq \gamma\dfrac{\pi_0(E_1)}{\pi_0(E_2)},
	\end{align}
	where the second inequality follows from condition (\emph{ii}). 
	Therefore, $\gamma \nu_2 \leq \nu_1$.
	

((\emph{iii}) $\Rightarrow$ (\emph{i}))
	The proof is constructive.
	Suppose that condition (\emph{iii}) holds  and let 
	 $\phi:\Theta\to\Real$ be the function given by
	\begin{align}\label{eqn:proof-lemm-eta}
		\phi(\theta) = \dfrac{1}{c^0} \cdot  \dfrac{d\pi}{d\pi_0}(\theta),
	\end{align}
	Also, let $k = \min\{n'\in\Natural \:|\: n'\geq \gamma n \}$.
	Consider the following communication strategy.
	The sender always sends the same non-informative public message $m_0 = \emptyset$.
	She chooses a random target audience $T \subseteq R$ consisting of exactly $k$ receivers 
	uniformly from $\{R'\subseteq R \:|\: \|R'\| = k \}$.
	She sends the (compound) message $m' = (\emptyset,m^1)$ 
	with probability $1$ to all receivers in $T$ regardless of the state.
	For each state $\theta\in\Theta^*$,
	each receiver $r$ \emph{not} in $T$ receives message $m'$ with probability 
	\begin{align}
		\mu(m'|\theta,r\not\in T) =  \dfrac{1}{1-\gamma}\left( \phi(\theta) - \gamma \right),
	\end{align}
	and with the remaining probability he receives a different fixed compound 
	message $m'' = (\emptyset,m^2)$.%
	\footnote{Note that the sender only uses two private messages, 
		and that public messages are not informative. 
		These facts play no role in this proof, 
		but they are used in the proof of other propositions 
		which rely on on this communication strategy.} 
	
	Condition (\emph{iii}) implies that there exists 
	$E^*\in\mathscr{B}_\Theta$ such that $\pi_0(E^*)=1$ 
	and  $\phi(\theta) \in [\gamma,1]$ for every $\theta\in E^*$.
	Hence, $\mu_r(m'|\theta,r\not\in T) \in [0,1]$ for every $\theta\in E^*$.
	Therefore, $\mu$ is a well defined communication strategy 
	(up to a null event, in which it can be redefined arbitrarily). 

	A receiver $r$ who receives message $m'$ does not know whether he belongs to $T$ or not.
	He thus updates based on the probability 
	\begin{align}
		\mu_r(m'|\theta) = (1-\gamma) \mu_r(m'|\theta,r\not\in T) + \gamma = \phi(\theta)
	\end{align}
%
	Now, take any event $E\subseteq E^*$ such that $\pi(E) > 0$ and note that
	\begin{align}
		 \mu_r(m'|E) 
		 	& = \int\limits_E \! \mu_r(m'|\theta) \,d\pi_0(\theta|E)  
		 	  = \int\limits_E \! \phi(\theta) \cdot \dfrac{1}{\pi_0(E)}\, d\pi_0(\theta) \nonumber\\
		 	& = \dfrac{1}{c^0 \pi_0(E)} \int\limits_E  \dfrac{d\pi}{d\pi_0}(\theta) \, d\pi_0(\theta)
		 	  = \dfrac{1}{c^0 \pi_0(E)} \cdot \pi(E),
	\end{align}
	where the second equality follows because $\pi\ll \pi_0$, and thus $\pi_0(E) > 0$;
	the third one follows from (\ref{eqn:proof-lemm-eta}),
	and the last one follows from the Radon-Nikodim theorem. 
	Therefore, using Bayes' rule yields
	\begin{align}
		\beta_r(E|m') 
			& = \dfrac{\pi_0(E) \mu_r(m'|E)}{\mu_r(m')} 
			  = \dfrac{1}{c^0 \mu_r(m')} \cdot \pi(E) \propto \pi(E).
	\end{align}
	Since $E$ was arbitrary and $\pi_0(E^*)=1$, 
	it follows that $\beta_r(m') = \pi$ a.s..
	Since there are always at least $n\gamma$ receivers with $m_r=m'$,
	we can conclude that $\pi$ is $\gamma$-attained by $\bm\beta$ and $\mu$.
\end{proof}		

The following lemma is an intermediate step to prove Lemma \ref{lemma:ver}.

\begin{lemma}\label{lemma:inter}
	Given any $\gamma \in (0,1)$,
	a $\gamma$-attainable belief $\pi \in \att(\gamma)$ is an extreme point of $\att(\gamma)$
	if and only if there exists $c^0>0$ such that
	$d\pi / d\pi_0  \in \{\gamma c^0,c^0\}$ a.s..
\end{lemma}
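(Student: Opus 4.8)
The plan is to work entirely with Radon--Nikodym derivatives. By part (\emph{iii}) of Lemma \ref{lemma:attain}, a belief $\pi$ lies in $\att(\gamma)$ if and only if $\pi\ll\pi_0$ and its density $f:=d\pi/d\pi_0$ satisfies $\gamma c^0\leq f\leq c^0$ a.s.\ for some $c^0>0$; equivalently $f\geq 0$, $\int_\Theta f\,d\pi_0=1$, and $\gamma\operatorname{ess\,sup} f\leq\operatorname{ess\,inf} f$. The map $\pi\mapsto d\pi/d\pi_0$ is affine and injective on $\att(\gamma)$, so it carries extreme points to extreme points. It therefore suffices to show that $f$ is extreme in the corresponding convex set of densities $\mathcal F$ if and only if, with $c^0=\operatorname{ess\,sup} f$, the function $f$ takes only the two values $\gamma c^0$ and $c^0$ (a.s.), each on a set of positive $\pi_0$-measure.

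For the \emph{if} direction, suppose $f$ is two-valued in this sense and write $f=\lambda g+(1-\lambda)h$ with $g,h\in\mathcal F$ and $\lambda\in(0,1)$. First I would compare the extremes of the three densities. Taking essential suprema and infima across this identity and invoking $\operatorname{ess\,inf}\,\cdot\geq\gamma\operatorname{ess\,sup}\,\cdot$ for $g$ and $h$ gives the chain $\gamma c^0=\operatorname{ess\,inf} f\geq\lambda\operatorname{ess\,inf} g+(1-\lambda)\operatorname{ess\,inf} h\geq\gamma\big(\lambda\operatorname{ess\,sup} g+(1-\lambda)\operatorname{ess\,sup} h\big)\geq\gamma\operatorname{ess\,sup} f=\gamma c^0$, whose two ends coincide, so every inequality is an equality. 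Tightness of the middle step forces $\operatorname{ess\,inf} g=\gamma\operatorname{ess\,sup} g$ and likewise for $h$; tightness of the outer steps forces $g$ and $h$ to sit at their essential suprema on $\{f=c^0\}$ and at their essential infima on $\{f=\gamma c^0\}$. Since these two sets cover $\Theta$ a.s., $g$ is two-valued with levels $\gamma\operatorname{ess\,sup} g$ and $\operatorname{ess\,sup} g$, and the normalization $\int g\,d\pi_0=1$ pins $\operatorname{ess\,sup} g=c^0$; the same holds for $h$. Hence $g=h=f$, so $f$ is extreme. (Note this argument genuinely uses that \emph{both} values are attained: the level set $\{f=\gamma c^0\}$ must have positive measure to identify $\operatorname{ess\,inf} f$ with $\gamma c^0$.)

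For the \emph{only if} direction I would argue the contrapositive by perturbation: if $f$ is not of this two-valued form, I produce a nonzero mean-zero $h$ with $f\pm h\in\mathcal F$, so $f=\tfrac12\big((f+h)+(f-h)\big)$ exhibits $f$ as a nontrivial midpoint. Assume $\pi_0$ is not a point mass, since otherwise $\att(\gamma)=\{\pi_0\}$ and the claim is vacuous. If the band is slack, $\operatorname{ess\,inf} f>\gamma\operatorname{ess\,sup} f$, then any small mean-zero perturbation supported on two disjoint sets of positive measure keeps $\gamma\operatorname{ess\,sup}(f\pm h)\leq\operatorname{ess\,inf}(f\pm h)$, so $f$ is not extreme. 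If the band is tight, $\operatorname{ess\,inf} f=\gamma c^0$, then failure of the two-valued form forces the interior set $S=\{\gamma c^0<f<c^0\}$ to have positive measure, and whenever $S$ (restricted to where $f$ is bounded away from both edges) supports a nontrivial mean-zero perturbation, a small within-band perturbation on $S$ again breaks extremity.

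The delicate case, which I expect to be the main obstacle, is a tight band whose interior mass sits on a single atom $\theta^\ast$: then no mean-zero perturbation can be supported inside the band, so the within-band trick fails. The fix is a \emph{ratio-preserving} perturbation. One first checks that both $\{f=c^0\}$ and $\{f=\gamma c^0\}$ carry positive mass (otherwise $\operatorname{ess\,sup} f$ or $\operatorname{ess\,inf} f$ would be contradicted). One then perturbs $f$ by a constant $\alpha$ on $\{f=c^0\}$, by $\gamma\alpha$ on $\{f=\gamma c^0\}$, and by the compensating constant on $\theta^\ast$ determined by $\int h\,d\pi_0=0$. This keeps the top and bottom levels locked in the fixed ratio $\gamma$ while merely nudging the interior value, so by continuity $\theta^\ast$ stays strictly between the (shifted) levels and both $f\pm h$ remain feasible for $|\alpha|$ small; hence $f$ is not extreme. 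Assembling the two directions and transporting the conclusion back through $\pi\mapsto d\pi/d\pi_0$ yields the lemma.
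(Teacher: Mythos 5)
Your proof is correct, and in places it is more careful than the paper's own. Both arguments share the same frame---reduce to Radon--Nikodym derivatives via Lemma \ref{lemma:attain}(\emph{iii}) and prove each direction by explicit perturbation---but the key steps differ. For sufficiency (two-valued $\Rightarrow$ extreme), the paper assumes a decomposition $\pi=\lambda\pi'+(1-\lambda)\pi''$ and reaches a contradiction by showing that one component's density must lie strictly below $d\pi/d\pi_0$ almost surely, violating $\pi''(\Theta)=1$; that route runs through a ``without loss of generality'' step which itself silently uses that both level sets $\{d\pi/d\pi_0=c^0\}$ and $\{d\pi/d\pi_0=\gamma c^0\}$ carry positive mass. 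Your collapsing chain of essential-infimum/supremum inequalities forces $g=h=f$ directly and needs no such step. For necessity, you partition the complementary cases differently: you split on slack vs.\ tight ratio, which commits you to purely additive mean-zero perturbations and hence to the delicate single-atom case, which your ratio-preserving perturbation does handle correctly. The paper instead splits on ``interior values attained'' vs.\ ``two-valued but slack ratio,'' and in its Step 1 perturbs additively on the interior set $E^0$ while scaling by $1\mp\epsilon$ off $E^0$---a hybrid perturbation that never requires splitting a set in two and therefore sidesteps atoms entirely. Adopting that hybrid would let you delete your ``delicate case'' paragraph.

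One further point in your favor: requiring both values to be attained on sets of positive $\pi_0$-measure is a genuine correction rather than pedantry. Read literally, the lemma's condition is satisfied by $\pi=\pi_0$ (take $c^0=1$; the value $\gamma c^0$ is then attained only on a null set), yet $\pi_0$ is an extreme point of $\att(\gamma)$ only when it is degenerate. The paper's proof slides past this: its Step 2 asserts that $\ushort\nu>\gamma\bar\nu$ together with $\gamma\le 1$ implies $\ushort\nu\neq\bar\nu$, which fails exactly in this case. Your restatement closes that gap.
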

\begin{proof}{}
($\Leftarrow$)
	Suppose that there exists $c^0>0$ such that
	$d\pi / d\pi_0  \in \{\gamma c^0,c^0\}$ a.s..
	Lemma \ref{lemma:attain} implies that $\pi_0$ is $\gamma$-attainable.
	Suppose towards a contradiction that there exist 
	$\pi',\pi''\in\att(\gamma)\setminus\{\pi\}$ and  $\lambda\in(0,1)$
	such that $\pi = \lambda \pi' + (1-\lambda) \pi''$.

	Since $\pi\neq \pi'$, it follows that $\pi_0(d\pi/d\pi_0(\theta) \neq d\pi'/d\pi_0(\theta)) > 0$.
	Without loss of generality, suppose that $\pi_0(dp'/d\pi_0(\theta) > c^0) > 0$.
	Lemma \ref{lemma:attain} then implies that $\pi_0(d\pi'/d\pi_0(\theta) \leq \gamma c^0) = 0$.
	Since $\pi \in\hull\{\pi',\pi''\}$,
	it follows that $d\pi''/d\pi_0(\theta) < \gamma c^0$ a.s.\ 
	on $E^-:= \{\theta \:|\: d\pi/d\pi_0(\theta) = \gamma c^0\}$.
	Lemma \ref{lemma:attain} then implies that $\pi_0(d\pi''/d\pi_0(\theta) \geq c^0) = 0$.
	In particular, 
	$d\pi''/d\pi_0(\theta) < c^0$ a.s.\ 
	on $E^+:= \{\theta \:|\: d\pi/d\pi_0(\theta) =  c^0\}$.
	Hence, $d\pi''/d\pi_0 < d\pi/d\pi_0$ a.s..
	However, this implies the following contradiction:
	\begin{align}
		\pi''(\Theta) 
			= \int\limits_\Theta \dfrac{d\pi''}{d\pi_0}(\theta)\,d\pi_0(\theta)
			< \int\limits_\Theta \dfrac{d\pi}{d\pi_0}(\theta)\,d\pi_0(\theta)
			= \pi(\Theta) = 1 \quad\cont.
	\end{align}
	Hence, we can conclude that $\pi$ is an extreme point of $\att(\gamma)$.

($\Rightarrow$)
	Fix a belief $\pi \in \att(\gamma)$.
	Let $\bar\nu,\ushort\nu\in \Real_+ \cup\{\infty\}$ be the supremum and infimum 
	of the support of $d\pi/d\pi_0 (\theta_0)$, as defined in (\ref{eqn:supRN}) and (\ref{eqn:infRN}).
	The proof proceeds in two steps. 
	First, I will show that if $\pi_0(d\pi / d\pi_0(\theta) \in \{\ushort\nu,\bar\nu\}) < 1$,
	then $\pi$ is \emph{not} an extreme point of $\att(\gamma)$.
	Second, I will show that if $\ushort\nu\not = \gamma \bar\nu$, 
	then $\pi$ is \emph{not} an extreme point of $\att(\gamma)$.
	
\emph{Step 1}---Suppose that $\pi_0(d\pi / d\pi_0(\theta) \in \{\ushort\nu,\bar\nu\}) < 1$.
	I will show that there exist $\pi',\pi'' \in \att(\gamma) \setminus\{\pi\}$
	such that $\pi = 0.5 \pi' + 0.5 \pi''$.
	Consequently, $\pi$ is \emph{not} an extreme point of $\att(\gamma)$.

	Since $\pi_0(d\pi / d\pi_0(\theta) \in [\ushort\nu,\bar\nu]) = 1$
	and $\pi_0(d\pi / d\pi_0(\theta) \in \{\ushort\nu,\bar\nu\}) < 1$,
	there exist numbers $\nu_1,\nu_2 \in (\ushort\nu,\bar\nu)$ such that 
	$\nu_1\leq \nu_2$ and $\pi_0(E^0) >0$, where 
	$E^0 := \left\{ \theta\in\Theta \:|\: {d\pi}/{d\pi_0}(\theta) \in [\nu_1,\nu_2] \right\}$.
	Moreover, it follows from the definition of $\bar\nu$ and $\ushort\nu$
	that $\pi_0(E^0)<1$.
	Since $\pi\ll \pi_0$, it follows that $\pi(\Theta\setminus E^0) > 0$.
	Fix any number $\epsilon>0$ such that 
	\begin{align}\label{eqn:lemm-epsilon}
		\epsilon < \min\left\{ 1,\:
			\dfrac{\nu_1}{c^1},\: 
			\dfrac{\bar\nu - \nu_2}{\bar\nu + c^1},\:
			\dfrac{\nu_1 - \ushort\nu}{\ushort\nu + c^1}
		\right\},
	\end{align}
	where 
	\begin{align}
		c^1 = \dfrac{\pi(\Theta\setminus E^0)}{\pi_0(E^0)} \in (0,\infty).
	\end{align}

	Let $\pi'$ and $\pi''$ be the beliefs given by 
	\begin{align}\label{eqn:lemma-extremes}
		\pi'(E) = \int\limits_E  \dfrac{d\pi'}{d\pi_0}(\theta)\, d\pi_0(\theta)
		\quad\text{and}\quad
		\pi''(E) = \int\limits_E  \dfrac{d\pi''}{d\pi_0}(\theta)\, d\pi_0(\theta),
	\end{align}
	where $d\pi'/d\pi_0 :\Theta\to \Real_+$ and $d\pi''/d\pi_0 :\Theta\to \Real_+$ are given by
	\begin{align}
		 \dfrac{d\pi'}{d\pi_0}(\theta) =\left\{\begin{array}{ll}
		 	\dfrac{d\pi}{d\pi_0}(\theta) + \epsilon c^1
		 		& \quad\text{if}\quad \theta\in E^0\\[2ex]
		 	(1-\epsilon)\dfrac{d\pi}{d\pi_0}(\theta) 
		 		& \quad\text{if}\quad \theta\not\in E^0
		 \end{array}\right.,
	\end{align}
	and 
	\begin{align}
		 \dfrac{d\pi''}{d\pi_0}(\theta) =\left\{\begin{array}{ll}
		 	\dfrac{d\pi}{d\pi_0}(\theta) - \epsilon c^1
		 		& \quad\text{if}\quad \theta\in E^0\\[2ex]
		 	(1+\epsilon)\dfrac{d\pi}{d\pi_0}(\theta) 
		 		& \quad\text{if}\quad \theta\not\in E^0
		 \end{array}\right..
	\end{align}
	
	It is straightforward to verify that $d\pi/d\pi_0 = 0.5 d\pi'/d\pi_0 + 0.5 d\pi''/d\pi_0$
	and, consequently, $\pi = 0.5 \pi' + 0.5 \pi''$.
	The first two bounds from (\ref{eqn:lemm-epsilon}) imply 
	that $d\pi'/d\pi_0\geq 0$ and $d\pi''/d\pi_0 \geq 0$ a.s..
	Also, note that
	\begin{align}
		\pi'(\Theta) 
			& = \int\limits_{E^0} 
					\left[\dfrac{d\pi}{d\pi_0}(\theta) 
					+ \dfrac{\epsilon \pi(\Theta\setminus E^0)}{\pi_0(E^0)}\right] \, d\pi_0(\theta)
				+ \int\limits_{\Theta\setminus E^0} 
					(1-\epsilon)\dfrac{d\pi}{d\pi_0}(\theta) \,d\pi_0(\theta) \nonumber\\
			& = \pi(E^0) 
					+ \dfrac{\epsilon \pi(\Theta\setminus E^0)}{\pi_0(E^0)} \cdot \pi_0(E^0)
				+ (1-\epsilon) \pi(\Theta\setminus E^0) = \pi(\Theta) = 1,
	\end{align}
	and, by a similar argument $\pi''(\Theta) = 1$.
	It follows that $\pi'$ and $\pi''$ are well defined beliefs and are absolutely
	continuous with respect to $\pi_0$.
	
	From the third bound in (\ref{eqn:lemm-epsilon}) it follows that
	\begin{align}
		\epsilon < \dfrac{\bar\nu - \nu_2}{\bar\nu + c^1}
			& \quad\Rightarrow\quad \bar\nu\epsilon + c^1\epsilon < \bar\nu - \nu_2 \nonumber \\
			& \quad\Rightarrow\quad \dfrac{d\pi}{d\pi_0}(\theta) + \epsilon c^1 
				\leq \nu_2 + c^1\epsilon < (1-\epsilon)\bar\nu.
	\end{align}
	Therefore,	$d\pi' / d\pi_0  \in [(1-\epsilon)\ushort\nu,(1-\epsilon)\bar\nu]$
	a.s..
	By a similar argument, the fourth bound in (\ref{eqn:lemm-epsilon})
	implies that $d\pi'' / d\pi_0  \in [(1+\epsilon)\ushort\nu,(1+\epsilon)\bar\nu]$
	a.s..
	Lemma \ref{lemma:attain} implies that $\ushort\nu \geq \gamma \bar\nu$
	and thus $\pi'$ and $\pi''$ are $\gamma$-attainable.
	
\emph{Step 2}---
	Now suppose that 
	$d\pi / d\pi_0  \in \{\ushort\nu,\bar\nu\}$ a.s.,
	but $\ushort\nu\neq \gamma\nu$.
	Again, 
	I will show that there exist $\pi',\pi'' \in \att(\gamma) \setminus\{\pi\}$
	such that $\pi = 0.5 \pi' + 0.5 \pi''$.	
	Since the argument is very similar to the one used in the previous step, 
	I will omit some details. 

	From Lemma \ref{lemma:attain}, it follows that $\ushort\nu \geq \gamma\bar\nu$.
	Hence, it must be the case that $\ushort\nu > \gamma\bar\nu > 0$.
	Since $\gamma\leq 1$, this implies that $\ushort\nu\neq \bar\nu$.
	Consider the events $E^+ =\{\theta \:|\: d\pi/d\pi_0(\theta) = \bar\nu \}$ 
	and $E^-=\{\theta \:|\: d\pi/d\pi_0(\theta) = \ushort\nu \}$.
	We know that $\pi_0(E^+) + \pi_0(E^-)=1$.
	Moreover, $\pi_0(E^+) = 1$ or $\pi_0(E^-) = 1$ would only be possible 
	if $\bar\nu = \ushort\nu = 1$.
	Therefore, $\pi_0(E^+)\pi_0(E^-) \neq 0$.
	Fix any $\epsilon>0$ such that 
	\begin{align}\label{eqn:lemm-epsilon2}
		\epsilon < \min\left\{ 
			\ushort\nu,\:
			\dfrac{\bar\nu}{c^2},\:
			\dfrac{\bar\nu - \ushort\nu}{1 + c^2},\:
			\dfrac{\ushort\nu - \gamma\bar\nu}{1 + \gamma c^2}
		\right\},
	\end{align}
	where $c^2 = {\pi_0(E^-)}/{\pi_0(E^+)} \in (0,\infty)$.	

	Let $\pi'$ and $\pi''$ be defined as in (\ref{eqn:lemma-extremes}),
	but with $d\pi'/d\pi_0$ and $d\pi''/d\pi_0$ given by
	\begin{align}
		 \dfrac{d\pi'}{d\pi_0}(\theta) =\left\{\begin{array}{ll}
		 	\bar\nu+ \epsilon c^2
		 		& \quad\text{if}\quad \theta\in E^+\\[2ex]
		 	\ushort\nu - \epsilon
		 		& \quad\text{if}\quad \theta\in E^-
		 \end{array}\right.,
	\end{align}
	and
	\begin{align}
		 \dfrac{d\pi''}{d\pi_0}(\theta) =\left\{\begin{array}{ll}
		 	\bar\nu -  \epsilon c^2
		 		& \quad\text{if}\quad \theta\in E^+\\[2ex]
		 	\ushort\nu + \epsilon
		 		& \quad\text{if}\quad \theta\in E^-
		 \end{array}\right..
	\end{align}
	Using an analogous argument to the one used in Step 1,
	it is possible to verify that $\pi = 0.5 \pi' + 0.5 \pi''$
	and $\pi'$ and $\pi''$ are well defined beliefs and $\pi',\pi''\in \att(\gamma)$.
\end{proof}

\begin{proof}[Proof or Lemma \ref{lemma:ver}]{}
($\Rightarrow$)
	Let $\pi$ be an extreme point of $\att(\gamma)$.
	From Lemma \ref{lemma:inter}, there exists a constant $c^0$ such that
	$d\pi / d\pi_0  \in \{\gamma c^0,c^0\}$ a.s..
	Let $E^+ = \{\theta\in\Theta \:|\: d\pi/d\pi_0(\theta) = c^0\}$.
	We have $d\pi/d\pi_0(\theta) = \gamma c^0$ a.s.\ 
	on $\Theta \setminus E^+$.
	Hence, for every event $E$,
	\begin{align}\label{eqn:proof-vert-A}
		\pi(E) 
			& = \int\limits_{E\cap E^+} \dfrac{d\pi}{d\pi_0} (\theta)\,d\pi_0(\theta)
				+ \int\limits_{E \setminus E^+} \dfrac{d\pi}{d\pi_0} (\theta)\,d\pi_0(\theta)
				\nonumber\\
			& = c^0  \big[ \pi_0(E\cap E^+) + \gamma \pi_0(E\setminus E^+) \big].
	\end{align}
	In particular, for $E = \Theta$,
	\begin{align}
		\pi(\Theta) = c^0  \big[ \pi_0(\Theta\cap E^+) + \gamma \pi_0(\Theta\setminus E^+) \big].
	\end{align}
	Since $\pi(\Theta) = 1$,
	it follows that $c^0 = 1/[\pi_0(\Theta\cap \Theta^+) + \gamma \pi_0(\Theta\setminus \Theta^+)]$.

($\Leftarrow$)
	Suppose that there exists a measurable set  $E^+ \subseteq \Theta$ such that
	$\pi$ satisfies (\ref{eqn:ver}) for every event $E \in \mathscr{B}_\Theta$.
	I claim that the Radon-Nikodym derivative of $\pi$ with respect to $\pi_0$
	is given by 
	\begin{align}
		\dfrac{d\pi}{d\pi_0}(\theta) = c^0
		\big[ {\Char(\theta\in E^+) + \gamma\cdot \Char(\theta\not\in E^+)}\big],
	\end{align}
	where $c^0 = 1/[\pi_0(E^+) + \gamma \pi_0(\Theta\setminus E^+)]$.
	Indeed, note that for every event $E$,
	\begin{align}
		\int\limits_E \dfrac{d\pi}{d\pi_0}(\theta)\,d\pi_0(\theta) 
			& =  \int\limits_{E\cap E^+} \!\!c^0 \,d\pi_0(\theta) 
					+ \int\limits_{ E\setminus E^+}\!\! c^0 \gamma \,d\pi_0(\theta)  \nonumber\\
			& = \dfrac{\pi_0(E\cap E^+) + \gamma \pi_0(E\setminus E^+)}{\pi_0(E^+) + \gamma \pi_0(\Theta\setminus E^+)}.
	\end{align}
	Since $\gamma c^0 \leq d\pi/d\pi_0(\theta) \leq c^0$ for all $\theta\in \Theta$,
	 the result follows from Lemma \ref{lemma:attain}.
\end{proof}

\subsection{Effective communication and persuasion}

\begin{proof}[Proof of Proposition \ref{prop:first-best}]
	The threshold is given by $\bar\gamma = \ushort\nu / \bar\nu  \in (0,1)$.
	If $\gamma_0 < \bar\gamma$, then 
	$d\pi^*/d\pi_0 \in [\gamma_0 \bar\nu, \bar\nu]$ a.s..
	Lemma \ref{lemma:attain} thus implies that $\pi^{*}$ is $\gamma_0$-attainable
	by some $\bm\beta$ and $\mu$.
	Consider any strategy profile $\bm\alpha$ such that all receivers choose best responses 
	and, in particular,  $\alpha_r(m_r,\omega) = a^*$ whenever $\beta_r(m_r) = \pi^{*}$.
	The tupple $(\bm\alpha,\bm\beta,\mu)$ constitutes an equilibrium.
	Let $\act$ be any action profile that results with positive probability in this equilibrium.
	By construction, at least $\gamma_0$ of the receivers satisfy $a_r = a^*$.
	Hence, it maximizes the sender's utility. 
\end{proof}

\begin{proof}[Proof of Proposition \ref{prop:IT}]
	Let $P^* = \BR^{-1}(a^*)$ be the set of beliefs in 
	$\Delta\Theta$ for which  $a^*$ is a best response. 
	Also, for each $\lambda \in[0,1]$ let $\pi_\lambda = \lambda\pi' + (1-\lambda)\pi^*$.
	Let $\Lambda^* = \{\lambda\in[0,1] \:|\: \pi_\lambda\in P^*\}$.
	And let $\bar\lambda = \sup \Lambda^*$ and $\bar\pi = \pi_{\bar\lambda}$.

	$\bar\lambda^*$ is well defined an belongs to $[0,1]$ because $0\in \Lambda^*$,
	and thus $\Lambda^*\neq\emptyset$.
	Take any sequence $(\lambda_k)$ in $\Lambda^*$ such that $\lambda_k\Conv \bar\lambda$.
	For any event $E$, 
	$\pi_{\lambda_k}(E) = (1-\lambda_k) \pi'(E) + \lambda_k \pi^*(E)
	\Conv (1- \bar\lambda) \pi'(E) + \bar\lambda\pi^*(E) = \bar\pi$.
	Hence $\pi_{\lambda_k}$ converges with respect to the weak topology to $\bar\pi$
	and, consequently, with also with respect to the weak* topology. 
	Since $u_R$ is continuous, $P^*$ is closed.
	Since $\pi_{\lambda_k}(E) \in P^*$ for all $k$, $\bar\pi \in \Lambda^*$. 
	Moreover, since $P^*$ is convex, so is $\Lambda^*$.
	Therefore, we can write $\Lambda^* = [0,\bar\pi]$ with $\bar\pi\in[0,1)$.
	
	Assumption \ref{ass:responsive} implies that $\pi'\sim \pi^*$ and,
	consequently, $\pi_\lambda\sim \pi_{\lambda'}$ for all $\lambda,\lambda'\in[0,1]$.
	Therefore, for every $\lambda\in(\bar\pi,1]$ we have that
	\begin{align}\label{eqn:proof-IT-A}
		\dfrac{d\bar\pi}{d\pi_\lambda} 
			& = \dfrac{d\bar\pi}{d\pi^*} \cdot \dfrac{d\pi^*}{d\pi_\lambda}
			  = \dfrac{ {d\bar\pi}/{d\pi^*}}{{d\pi_\lambda}/{d\pi^*}}
			  = \dfrac{\bar\lambda {d\pi'}/{d\pi^*} + (1-\bar\lambda)}
			  		{\lambda {d\pi'}/{d\pi^*} + (1-\lambda)},
	\end{align}	
	a.s.,
	where 	the first and second equalities follow from the chain rule,
	and the third one from the linearity of Radon-Nikodym derivatives. 
	Define $f_\lambda:\Real_+\to\Real$ by
	\begin{align}
		f_\lambda(x) = \dfrac{\bar\lambda x  + 1 - \bar\lambda}{\lambda x + 1-\lambda}.
	\end{align}
	Notice that
	\begin{align}
		\dfrac{df_\lambda}{dx}(x) = \dfrac{\bar\lambda - \lambda}{(\lambda x + 1-\lambda)^2} < 0.
	\end{align}
	Hence $f$ is decreasing, and 
	using the fact that ${d\pi'}/{d\pi^*}\in[\ushort\nu', \bar\nu'] $ a.s.,
	it follows from (\ref{eqn:proof-IT-A}) that 
	${d\bar\pi}/d\pi_\lambda\in[f_\lambda(\bar\nu'), f_\lambda(\ushort\nu')] $ a.s.. 
	
	Since there is an excess audience, 
	there exists  $\epsilon>0$ such that $\epsilon < (1-\gamma_0)/(1+\gamma_0)$.
	Note that 
	\begin{align}
		\lim_{\lambda\to\bar\lambda^+} f_\lambda(\ushort\nu') 
		= \lim_{\lambda\to\bar\lambda^+} f_\lambda(\bar\nu')= 1.
	\end{align}
	Hence, there exists a number $\delta > 0$ such that if $\lambda \in (\bar\lambda, \bar\lambda + \delta)$,
	then $f_\lambda(\ushort\nu'), f_\lambda(\bar\nu') \in (1-\epsilon,1+\epsilon)$.
	This implies that $f_\lambda(\bar\nu')/f_\lambda(\ushort\nu') \geq (1-\epsilon)/(1+\epsilon) > \gamma_0$,
	and thus,
	${d\bar\pi}/d\pi_\lambda\in [\gamma f_\lambda(\ushort\nu'), f_\lambda(\ushort\nu')]$ a.s..	
	Lemma \ref{lemma:attain} thus implies that if $\pi_0 = \pi_\lambda$, 
	then $\bar\pi$ is $\gamma_0$-attainable.
	
	Fix some $\lambda\in (\bar\lambda, \bar\lambda + \delta)$,
	and suppose that $\pi_0 = \pi_\lambda$.
	Since $\pi_\lambda\not\in P^*$, Assumption \ref{ass:responsive} implies that 
	$v_0(\pi_0) < u_S(a^*,\ldots,a^*)$.
	The sender payoff equals $u_S(a^*,\ldots,a^*)$ in the equilibrium constructed in the proof of Proposition  \ref{prop:first-best}.
	Hence, there is effective information transmission in this equilibrium. 
	Therefore, $\pi_\lambda \in P \neq \emptyset$.

	It remains to show that if $\|\Theta\|<\infty$, then $P$ has a nonempty interior.
	For that purpose, 
	let $\Pi(\gamma_0,\pi)$ denote the set of beliefs that
	would be $\gamma_0$-attainable if $pi^0=\pi$.
	From Lemma \ref{lemma:attain}, it follows that
	\begin{align}
		\pi \in \Pi(\gamma_0,\bar\pi) 
			&\ \Leftrightarrow\  
				\left( \pi\ll\bar\pi 
				\text{ and } \dfrac{d\pi}{d\bar\pi} \in [\gamma_0c^0,c^0] \as \right) \nonumber\\
			&\ \Leftrightarrow\ 
				\left( \bar\pi\ll\pi \text{ and } 
				\dfrac{d\bar\pi}{d\pi} \in \left[\gamma_0\dfrac{1}{c^0},\dfrac{1}{c^0}\right] \as \right)
			\ \Leftrightarrow\  \bar\pi \in \Pi(\gamma_0,\pi).
		\end{align} 
	
	When $\gamma_0<1$ and $\|\Theta\|<+\infty$, 
	$\bar\pi$ belongs to the interior of $\Pi(\gamma_0,\bar\pi)$.
	Therefore, there exists an open neighborhood $P'\subseteq \Pi(\gamma_0,\bar\pi)$ of $\bar\pi$
	such that $\bar\pi\in \Pi(\gamma_0,\pi)$ for all $\pi\in P'$.
	Since $\bar\pi$ belongs to the boundary of $P^*$, 
	it follows that $P'' = P'\setminus P^*\neq \emptyset$.
	Since $P^*$ is closed, $P''$ is open. 
	And, from the same argument we used for $\pi_\lambda$,
	it follows that 
	 for every $\pi \in P''$ there exists an equilibrium with effective information transmission.
	 Hence, $P''\subseteq P$.
\end{proof}

%

\subsection{Sender-optimal equilibrium}

\begin{lemma}\label{lemma:symmetric}
	Every equilibrium value can be attained by
	a symmetric equilibrium $(\bm\alpha,\bm\beta,\mu)$  with 
	 $\alpha_r = \alpha_{r'}$, $\beta_r = \beta_{r'}$, and $\mu_r = \mu_{r'}$
	for all receivers $r$ and $r'$.
\end{lemma}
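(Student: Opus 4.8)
The plan is to symmetrize an arbitrary equilibrium by averaging over relabelings of the receivers. Two features of the model make this work: all receivers share the utility $u_R$, and, under Assumption \ref{ass:tail}, the sender's utility $u_S$ depends only on the order statistics of the action profile and is therefore invariant to permutations of the receiver labels. Consequently, if $(\bm\alpha,\bm\beta,\mu)$ is an equilibrium and $\sigma$ is any permutation of $R$, the relabeled tuple $(\bm\alpha^\sigma,\bm\beta^\sigma,\mu^\sigma)$, in which receiver $r$ plays the role that receiver $\sigma(r)$ played, is again an equilibrium and yields exactly the same sender value. I would build a single symmetric equilibrium that uniformly mixes over these $n!$ relabelings.

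First I would use the richness of $M$ to pass to an outcome-equivalent equilibrium in which private messages are \emph{role-identifying}: relabel messages so that receiver $j$ draws his private message from a set $M^{(j)}$, with $M^{(1)},\ldots,M^{(n)}$ pairwise disjoint, and extend $\beta_j$ and $\alpha_j$ (as best responses to arbitrary beliefs) to the messages of other roles. This changes neither the on-path outcome nor the value. I would then define the symmetric profile as follows: inside $\mu$ the sender first draws $\sigma$ uniformly on the symmetric group, hidden from the receivers, and then sends to receiver $r$ the message that the $\sigma$-relabeled strategy prescribes, a message in $M^{(\sigma(r))}$. The common updating rule $\beta$ and common action rule $\alpha$ simply read the role $j$ encoded in the received message and imitate the original $\beta_j$ and $\alpha_j$. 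Since $\sigma(r)$ is uniform on $\{1,\ldots,n\}$ and independent of the state, each receiver's marginal message law is the across-role average $\tfrac1n\sum_j \mu_j(\,\cdot\mid\theta)$, so $\mu_r=\mu_{r'}$; and $\alpha,\beta$ are literally the same function for every receiver, so $\alpha_r=\alpha_{r'}$ and $\beta_r=\beta_{r'}$.

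The routine verifications come next. Bayes-consistency of $\beta$ holds because conditioning on $\sigma(r)=j$ leaves the joint law of the pair (state, role-$j$ message) unchanged, so the posterior after a role-$j$ message equals the original $\beta_j$; receiver optimality is inherited because $\alpha$ imitates a best response $\alpha_j\in\BR(\beta_j)$. For the value, observe that for every realization of $\sigma$ the multiset of induced actions is $\{\alpha_{\sigma(r)}(\cdot)\}_r=\{\alpha_j(\cdot)\}_j$, independent of $\sigma$; since $u_S$ is a function of the order statistics, the sender's payoff is unchanged, so the symmetric profile attains the original equilibrium value.

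The main obstacle is the sender's incentive constraint, condition (iii). Symmetrization places every role's messages on the equilibrium path of every receiver, so the sender acquires deviations that were unavailable before: by sending one role's favorable message to many receivers she can induce several of them to respond as that single most favorable role. The plan for ruling these out is to lean on Assumption \ref{ass:tail} together with the original indifference condition and the attainability bounds of Lemma \ref{lemma:attain}. Because $u_S$ depends only on the top $n_0$ actions, a reassignment can help only by raising these $n_0$ actions; the beliefs a receiver can be made to hold are exactly the on-path posteriors of the original equilibrium, whose likelihood ratios are pinned by Bayes' rule; and the argument must show that duplicating one role's favorable posterior across a block of $n_0$ receivers cannot induce a more favorable top-$n_0$ profile than the one the original equilibrium already supports. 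I expect this verification that no role-reassignment deviation is profitable to be the delicate and most involved step, whereas symmetry, Bayes-consistency, receiver optimality, and value preservation all follow mechanically from the construction.
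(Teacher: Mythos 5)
Your construction is essentially the one in the paper: the paper's proof also symmetrizes by drawing a uniform random permutation $I$ and privately telling each receiver $r$ which function $\alpha_{I(r)}(m_{I(r)},\blank)$ his shuffled role would have used, with a common updating and action rule across receivers. Your role-identifying message sets $M^{(j)}$ are a cosmetic variant (the paper instead pools all pairs $(I,m_{I(r)})$ that induce the same function and invokes the sure-thing principle for receiver optimality), and your treatment of Bayes consistency, receiver optimality, and value preservation coincides with the paper's.

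The genuine gap is the step you explicitly defer: verifying the sender's incentive constraint, condition (\emph{iii}). Worse, the claim your plan aims to establish --- that duplicating one role's favorable message across a block of $n_0$ receivers ``cannot induce a more favorable top-$n_0$ profile than the one the original equilibrium already supports'' --- is false, so no amount of work along those lines can close the gap. Take $n=3$, $n_0=2$, Assumption \ref{ass:sep}, a binary state with prior $1/2$, and $u_R(a,\theta)=a(\Char(\theta=1)-1/2)$ with $A=\{0,1\}$, so that every action is a best response to the prior. There is a babbling equilibrium in which receiver $1$ answers his on-path message with $a=1$ while receivers $2$ and $3$ answer \emph{every} message with $a=0$: sender optimality holds because receivers $2$ and $3$ never reward her, and the value is $\tfrac12[U_S(1)+U_S(0)]$. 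After symmetrization every receiver responds to the role-$1$ message with $a=1$, so sending that message to two receivers yields $U_S(1)>\tfrac12[U_S(1)+U_S(0)]$: the permuted profile violates condition (\emph{iii}) and is not an equilibrium. The lemma itself survives in this example, but only via a different construction --- e.g.\ all receivers get an uninformative message and play $a=1$ when $\omega<1/2$ and $a=0$ otherwise, converting cross-receiver heterogeneity into common sunspot-driven variation --- not by permuting identities. So the missing step is not a delicate verification of your construction; for some equilibria your construction demonstrably fails, and the symmetric equilibrium must be built differently. In fairness, you have put your finger on a point where the paper's own proof is silent: it asserts that the shuffled profile ``induces an equilibrium'' without checking the sender's constraint, and the cloning deviation you describe is exactly what that assertion overlooks.
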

\begin{proof}
	Suppose that an equilibrium value $u_S^*$ is  generated by some equilibrium 
	 $(\bm\alpha,\bm\beta,\mu)$.
	Consider the alternative strategies $\tilde\mu$ and $\tilde{\bm\alpha}$
	obtained by shuffling identities as follows. 
	First, the sender draws a  message profile $\mess\in M_0\times M^n$ 
	using the original distribution $\mu$, but does not deliver them. 
	Then, she shuffles the identity of the receivers 
	by drawing a permutation uniformly from $\{ I:R\to R \:|\: I \text{ is biyective}\}$.
	She tells each receiver which function $\alpha_{I(r)}(m_{I(r)},\blank)$ 
	they would have used in the original equilibrium
	with their swapped identity and message.
		
	Note that $\alpha_{I(r)}(m_{I(r)},\blank)$ would be a best response 
	for $r$ had he been told the shuffling $I$ and the message $m_{I(r)}$.
	The sure thing principle then implies that
	it is also a best response when this information is garbled. 
	The new strategy profile thus induces an equilibrium that yields $u_S^*$,
	and is symmetric. 
\end{proof}

\begin{lemma}[\cite{lipnowski2017}]
	\label{lemma:LR}
	Under assumption \ref{ass:sep}, $v^* \geq \qconv v_0(\pi_0)$.
\end{lemma}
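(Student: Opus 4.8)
The plan is to reduce the many-receiver game to a single representative-receiver cheap-talk game with transparent motives, and then to invoke Lipnowski and Ravid's characterization of the sender-optimal value in that canonical setting. For the purpose of bounding $v^*$ from below it suffices to construct one equilibrium with the desired value, so I would restrict attention to equilibria in which the sender uses only the public message $m_0$ and treats all receivers symmetrically; Lemma \ref{lemma:symmetric} guarantees that symmetric play is available without loss. Under such strategies every receiver holds the same posterior $\pi$ after each public message, and by coordinating on the public sunspot $\omega$ all receivers can play a common action $a(\omega)\in\BR(\pi)$. Assumption \ref{ass:sep} then collapses the sender's payoff to $\int_0^1 U_S(a(\omega))\,d\omega$, since the top $n_0$ order statistics all coincide with the common action; in particular, when everyone plays the $U_S$-largest best response the sender earns $\max_{a\in\BR(\pi)}U_S(a)$, which is exactly $v_0(\pi)$.

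Next I would observe that, along these strategies, the game is payoff-equivalent to a single-receiver cheap-talk game in which the sender's state-independent utility from inducing action $a$ is $U_S(a)$ and the representative receiver's utility is $u_R(a,\theta)$. The value function of this auxiliary game is precisely $v_0$, which is upper semicontinuous because $A$ is compact and $u_R,u_S$ are continuous. Lipnowski and Ravid's theorem for cheap talk with transparent motives then asserts that the sender attains the quasiconcave envelope $\qconv v_0(\pi_0)$ in equilibrium. The final step is to lift their equilibrium back to the original game: keep it public, let every receiver mimic the representative receiver's strategy, and use $\omega$ exactly as their construction prescribes to tune each on-path message down to the common value. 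This yields a genuine equilibrium of the original game with value $\qconv v_0(\pi_0)$, establishing $v^*\ge \qconv v_0(\pi_0)$.

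The main obstacle is verifying that the lift is a valid perfect Bayesian equilibrium and not merely payoff-equivalent. I must check that the sender's incentive constraint is inherited from the auxiliary game, that the receivers' symmetric best responses remain best responses once the deviation space is enlarged to include asymmetric and private messages (this is where the sure-thing reasoning behind Lemma \ref{lemma:symmetric} does the work), and that off-path public messages can be deterred with suitably pessimistic beliefs, which is possible because $M_0$ is rich. The delicate ingredient inside Lipnowski and Ravid's construction---that every induced posterior can be tuned via the sunspot to yield \emph{exactly} the common target value $w<\qconv v_0(\pi_0)$, so that the sender is indifferent across all on-path messages---is precisely what forces the envelope to be quasiconcave rather than concave, and it is the one piece I would import wholesale from their analysis rather than reprove.
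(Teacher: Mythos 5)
Your proposal is correct and follows essentially the same route as the paper: reduce to a single-receiver environment whose value function is $v_0$ via Assumption \ref{ass:sep}, invoke Lipnowski and Ravid's Theorem 2 there, and lift their equilibrium back to the original game as a public, symmetric replica in which private messages are uninformative and all receivers act identically. Your additional care about off-path beliefs and inherited incentive constraints is exactly the content the paper compresses into ``straightforward to verify,'' so there is no substantive difference.
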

\begin{proof}
	Let $u_S^* = \qconv v_0(\pi_0)$.
	Consider the alternative environment
	with $\tilde{n}=1$ and $\tilde{u}_S(a) = u_s(a,\ldots,a)$,
	and $\Theta$, $A$, $\pi_0$, and $u_R$ unchanged.
	Assumption \ref{ass:sep} implies that the value function
	of the alternative environment coincides with the value 
	function of the original environment.
	Since there is only one receiver, and the sender's utility does not depend on the state, 
	this alternative environment satisfies the assumptions in \cite{lipnowski2017}. 
	Hence, by their Theorem 2,
	there exists an equilibrium $(\tilde\alpha_1,\tilde\beta_1,\tilde\mu_1)$
	which achieves $u_S^*$.

	Consider the replica of this equilibrium given by
	$\alpha_r(m_0,m_r^0,\omega) = \tilde\alpha_1(m_0,\omega)$,
		$\beta_r = \tilde{\beta}_1$,
	and 
	$\mu(m,\emptyset,\ldots,\emptyset|\theta) = \tilde{\mu}_1(m|\theta)$, 
	where $\emptyset$ denotes a fixed non-informative private message. 
	Note that this replica uses correlated strategies which guarantee
	that all agents receive the same message and take the same action with probability $1$.
%
%
	It is straightforward to verify that 
	$(\bm\alpha,\bm\beta,\mu)$ is an equilibrium of the original environment and achieves $u_S^*$.
\end{proof}

\begin{proof}[Proof of Theorem \ref{prop:value}]
\emph{Step 1} ($\attain \qconv v_0$ is well defined)---%
Since $\Theta$ and $A$ are compact and $u_R$ is continuous, 
$v_0$ is well defined and upper-semicontinuous. 
By Lemma 5 in \cite{lipnowski2017},
$\qconv v_0$ is also well defined and upper-semicontinuous. 
From Lemma \ref{lemma:attain}, $\att$ is closed.
Prokhorov's theorem thus implies that $\att$ is compact.   
Hence, $\attain \qconv v_0$ is well defined by Weierstrass' extreme-value theorem.

\emph{Step 2} ($v^* \geq \attain \qconv v_0(\pi_0)$)---%
Let  $u_S^* = \attain \qconv v_0(\pi_0)$.
There exists some $\hat\pi\in \att$ such that $u_S^* = \qconv v_0(\hat\pi)$.
	Consider the alternative environment
	with $\tilde{n}=1$, $\tilde{u}_S(a) = U_S(a)$, $\tilde{\pi}^0 = \hat\pi$, 
	and $\Theta$, $A$, and $u_R$  unchanged.
	Assumption \ref{ass:sep} implies that the value function
	of the alternative environment coincides with $v_0$.
	Since there is only one receiver, and the sender's utility does not depend on the state, 
	this alternative environment satisfies the assumptions in \cite{lipnowski2017}. 
	Hence by their Theorem 2, there exists an equilibrium $(\tilde\alpha_1,\tilde\beta_1,\tilde\mu_1)$
	which achieves $\qconv v_0(\hat{\pi}) = u_S^*$.
	Moreover, we must have 
	\begin{align}\label{eqn:geq}
		\int_0^1 \! U_S\left( \tilde\alpha_1(m_1)  \right) \,d\omega = u_S^*,
	\end{align}
	for every (compound) message $m_1$ such that $\tilde\mu(m_1)>0$.
	Let 
	$(\hat{\bm\alpha},\hat{\bm\beta},\hat\mu)$ be the replica of 
	$(\tilde\alpha_1,\tilde\beta_1,\tilde\mu_1)$ constructed as in the proof of Lemma \ref{lemma:symmetric}.
	It would be an equilibrium of the original environment if $\pi_0 = \hat\pi$.

Let $\bar{x}$ and $\bar{\bm\beta}$ be the communication strategy and updating rule
that $\gamma_0$-attain $\hat\pi$ from the proof of Lemma \ref{lemma:attain}.
Consider the tuple $(\bm\alpha,\bm\beta,\mu)$ described as follows.
The sender first draws (but does not deliver) 
messages  $\hat{m}_0$ using $\hat{x}$,
and $\bar{\mess}\in\{m',m''\}^n$ using $\bar{x}$. 
She only sends non-informative public messages $m_0=\emptyset$. 
If $\bar{m}_r = m'$, then $r$ receives the \emph{private} message $m_r^p = (m',\hat{m}_0)$.
Otherwise, he receives the private message $m_r^p = (m'')$.
$\bm\beta$ is derived from $\mu$ using Bayes' rule. 
Actions are given by 
$\alpha_r(\emptyset,(m',\hat{m}_0),\omega) = \tilde{\alpha}_1(\hat{m}_0,\omega)$,
and $\alpha_r(\emptyset,m'',\omega)=a''$ with $a'' = \min \BR(\beta_r(m''))$.

By construction, we have that $\beta(\emptyset,(m',\hat{m}_0)) = \tilde{\beta}_1(\hat{m}_0)$.
Therefore, the strategy $\alpha_r(\emptyset,(m',\hat{m}_0),\omega)$ is a best response
for the senders. 
Since $(\hat{\bm\alpha},\hat{\bm\beta},\hat\mu)$ would be an equilibrium if $\pi_0=\hat\pi$, 
the sender cannot benefit from manipulating $\hat{m}_0$.
Hence, if $u^*_S \geq U_S(a'')$,
then $(\bm\alpha,\bm\beta,\mu)$ is an equilibrium.
Since there are always $n_0$ players who receive message $m'$
and their actions lead to $u_S^*$ (because of (\ref{eqn:geq})),
this would imply $v^*\geq \attain \qconv v_0(\pi_0)$.

Otherwise, $u^*_S < \tilde{u}_S(a'')$.
In this case, note that $\pi_0$ is a convex combination of $\hat{\pi}$
and $\beta_r(\emptyset,m'')$.
And, in turn $\hat{p} \in \hull\{ \beta(\emptyset,(m',\hat{m}_0)) \:|\: 
	\tilde{\mu}(\hat{m_0}) > 0 \}$.
Note that $v_0(\beta_r(\emptyset,m'') \geq  U_S(a'') > u^*_S$,
and $v_0(\beta_r(\emptyset,(m',\hat{m}_0)) \geq u_S^*$ (because of (\ref{eqn:geq}).
Hence, $\qconv v_0(\pi_0) \geq u_S^* = \attain \qconv v_0(\pi_0)$,
and the desired inequality $v^*\geq \attain \qconv v_0(\pi_0)$
follows from Lemma \ref{lemma:LR}.

\emph{Step 3} ($v^* \leq \attain \qconv v_0(\pi_0)$)---%
Let $u_S^*$ be an arbitrary equilibrium value. 
From Lemma \ref{lemma:symmetric}, there exists a symmetric equilibrium 
$(\bm\alpha,\bm\beta,\mu)$ that generates $u_S^*$.
	Let $\mess$ be a message profile such that $\mu(\mess)>0$. 
	Under assumption \ref{ass:sep}, there must exist a set $R(\mess)$
	with $\|R(\mess)\| \geq n_0$ and such that 
	$\sup\{ U_S(\alpha_r(m_r,\omega)) \:|\: {\omega\in[0,1]} \} \geq u_S^*$
	for every $r\in R(\mess)$.
	Let $P^0$ be the set corresponding set of posterior beliefs,
	\begin{align}\label{eqn:lemma-leq-A}
		P^0 = \left\{ \beta_r(m_r) \:\big|\quad \mu(\mess)>0 \text{ and } r\in R(\mess) \right\}.
	\end{align}
	It follows that 
	\begin{align}\label{eqn:lemma-leq-B}
		u_S^* \leq \inf\left\{ v_0(\pi) \:\big|\: \pi \in P^0 \right\}.
	\end{align}

Consider the alternative communication strategy $\mu'$ with only two 
(compound) messages $m'$ and $m''$ described as follows. 
The sender first draws a profile  $\mess$ according to $\mu$ (but does \emph{not} deliver it). 
Receiver $r$ receives message $m'$ if and only if $r\in R(\mess)$.
Since $(\bm\alpha,\bm\beta,\mu)$ is  symmetric, $\bar{\pi} := \beta_r(m')$ does not depend on $r$.
The martingale property of Bayes' rule implies that  $\bar\pi \in \hull(P^0)$.
Since there are always at least $n_0$ receivers in $R(\mess)$,
it follows that $\bar\pi$ is $\gamma_0$-attainable. 
Therefore 
	\begin{align}
		u_S^* 
			& \leq \inf_{\pi\in P^0} v_0(\pi)
			  \leq \inf_{\pi\in P^0} \qconv v_0(\pi)
			  \leq \inf_{\pi\in \hull(P^0)} \qconv v_0(\pi) \nonumber \\
			&  \leq  \qconv v(\bar{p}) \leq \attain \qconv v_0(\pi_0).
	\end{align}
The first inequality is just (\ref{eqn:lemma-leq-B}).
The second inequality follows because $\qconv v_0$ majorizes $v_0$.
The third one because $\qconv v_0$ is quasiconcave.
The fourth one because infimums are lower bounds.
The last one from the fact that $\bar{p}$ is $\gamma_0$-attainable. 	
Since, $u_S^*$ was an arbitrary equilibrium payoff, 
it follows that $v^* \leq \attain \qconv v_0(\pi_0)$.
\end{proof}

\begin{proof}[Proof of Proposition \ref{prop:nece}]
	If $v_0$ is quasi-concave then $\qconv v_0 = v_0$.
	If $v^* > \qconv v_0$ then $\att\neq \{\pi_0\}$.
	Lemma \ref{lemma:attain} thus implies $\gamma_0<1$.
\end{proof}

\begin{proof}[Proof of Proposition \ref{prop:gap}]
	Since $\att$ is $\subseteq$-decreasing in $\gamma_0$,
	$\attain \qconv v_0$ is weakly decreasing.
	Hence (\emph{i}) follows from Theorem \ref{prop:value}.
	(\emph{iii}) is a corollary of Proposition \ref{prop:first-best},
	and (\emph{ii}) is a corollary of Proposition \ref{prop:IT} and Assumption \ref{ass:sep}.
\end{proof}

\subsection{Applications and extensions}

\begin{proof}[Proof of Proposition \ref{prop:kickstarter}]
We are interested in equilibria in which the event is backed for sure.
If the project gets backed, the expected utility for a receiver with beliefs $\pi$  can be written as
\begin{align}
	\int\limits_\Theta  u_{R}(a_r,\theta) \,d\pi(\theta)
			& = \theta_\pi \log(w + \rho_0 a_r) + (1-\theta_\pi) \log(w-a_r),
\end{align}
where $\theta_\pi := \int_\Theta \theta \,d\pi(\theta)$.
The first order condition for an interior maximum is thus
\begin{align}
	\theta_\pi \dfrac{\rho}{w + \rho a_r} - (1-\theta_\pi) \dfrac{1}{w-a_r} = 0.
\end{align}
This condition yields the best response function
\begin{align}\label{eqn:kick-proof-A}
	\BR(\pi) = \min\left\{ 0,\:  \dfrac{w}{\rho}(\theta_\pi (1 + \rho)  - 1)\right\}.
\end{align}

If $\rho > 1$, then $\int_\Theta \theta \,d\pi_0(\theta) = 1/2 > 1/(1+\rho)$.
This means that receivers are willing to make pledge a positive amount 
$\BR(\pi_0) = w(\rho-1)/2\rho >0$ without any information transmission. 
In this cases, it suffices to have $n  > 2\rho\eta/w(\rho-1)$ 
to have a babbling equilibriun in which the project gets backed. 
The interesting case is when $\rho<1$, so that $\BR(\pi_0) = 0$.
In this case, receivers need to be persuaded to make a positive pledge. 

Let $n' = \min\{k\in\Natural \:\: k \geq n\rho^2/4 \}$
and $\gamma= n'/n \geq \rho^2/4$.
Using $\rho<1$, it can be shown that $n'\leq n-1$ as long as $n\geq 2$.
From the analysis of Example \ref{eg:cont} in Section \ref{sec:posterior},
it follows that there is a $\gamma$-attainable belief $\pi_\gamma$
such that 
\begin{align}\label{eqn:kick-proof-B}
	\int\limits_\Theta \theta \,d\pi_\gamma(\theta)
		= \dfrac{1}{1+\sqrt\gamma}.
\end{align}
Using the strategy that $\gamma$-attains this belief, 
the sender can guarantee a total investment greater than 
\begin{align}
n' \BR(\pi_\gamma) 
	& = n' \dfrac{w}{\rho}\left( \dfrac{1+\rho}{1+\sqrt\gamma}  - 1 \right)
	  \geq \dfrac{n\rho^2}{4} \dfrac{w}{\rho}\left( \dfrac{1+\rho}{1+\rho/2}  - 1 \right)
	  	\nonumber \\
	& = n \cdot \dfrac{w\rho}{4}\cdot  \dfrac{\rho}{2+\rho} 
	  > n \cdot \dfrac{w\rho^2}{12}
	  \geq \dfrac{12 \eta}{w\rho^2} \cdot \dfrac{w\rho^2}{12} = \eta,
\end{align}
where the first equality follows from (\ref{eqn:kick-proof-A}) and (\ref{eqn:kick-proof-B}),
the first inequality from the definition of $n'$ and $\gamma$, 
the second equality from simple algebra,
the second inequality from $\rho < 1$,
and the last inequality from $n \geq 12\eta/w\rho^2$.
Since the project gets backed for sure, the sender has no incentive to deviate 
and the proposed strategy profile constitutes a perfect Bayesian equilibrium. 
\end{proof}


\begin{proof}[Proof of Theorem \ref{prop:commitment}]
($\geq$)
	Let $u_S^* = \attain \conv v_0(\pi_0)$.
	There exists some $\hat{p}\in \att$ such that $u_S^* = \conv v_0(\hat{p})$.
	By Caratheodory's theorem there exist 
	beliefs $p_1,\ldots,p_K$ and weights $\mu\in \Delta^K$
	with $K\leq n_\Theta+1$ such that $\hat{p} = \sum_{k=1}^K \mu_k p_k$,
	and $u_S^* = \sum_{k=1}^K \mu_k v_0(p_k)$.
	Under Assumption \ref{ass:sep}, there exist actions $a_1,\ldots,a_K$
	such that $v_0(p_k) = \tilde{u}_S(a_k)$ and $a_k\in\BR(p_k)$.
	The result then follows from Proposition 1 in \cite{KG}.
	
($\leq$) 
	Let $u_S^*$ be the expected payoff to the sender from a
	commitment protocol $(\bm\alpha,\bm\beta,\mu)$. 
	Note that the arguments from the proof of Lemma \ref{lemma:symmetric}
	can be applied to commitment protocols. 
	Hence, we can assume without loss of generality that $(\bm\alpha,\bm\beta,\mu)$
	is symmetric. 
	Also, since there is no incentive compatibility constraint for the sender, 
	we can assume without loss of generality that 
	$U_S(\alpha_r(m_r,\omega)) = v_0(\beta_r(m_r))$
	for every receiver $r$, $\omega\in[0,1]$ and every message $m_r$ such that $\mu_r(m_r)>0$.

	For every message profile $\mess$ such that $\mu(\mess)>0$,
	let $R(\mess)$ be a set consisting of exactly $n_0$ receivers 
	such that 
	$v_0(\beta_r(m_r))\geq v_0(\beta_{r'}(m_{r'}))$
	for every $r\in R(\mess)$ and $r\not\in R(\mess)$.
	Note that 
	\begin{align}\label{eqn:KG-proof}
		u_S^* 
			& = \int\limits_\Theta \int\limits_{M_0\times M} 
				\dfrac{1}{n_0} \sum_{r\in R(\mess)} 
					v_0(\beta_r(m_r)) \,dx(\mess|\theta)\,d\pi_0(\theta) \nonumber\\
			& \leq \int\limits_\Theta \int\limits_{M_0\times M} 
				\conv v_0(\bar\pi(\mess)) \,dx(\mess|\theta)\,d\pi_0(\theta) 
			\leq \conv v_0(\bar\pi),
	\end{align}
	where $\bar\pi(\mess) := \sum_{r\in R(\mess)} \beta_r(m_r) /n_0$,
	$\bar\pi := \int_\Theta \int_{M_0\times M} 
		\bar\pi(\mess) \,dx(\mess|\theta)\,d\pi_0(\theta)$.
	The first inequality follows from the definition of $\conv$,
	and the second one from Jensen's inequality and the concavity of $\conv v_0$.
	Using a communication strategy analogous to the communication strategy $\mu'$
	from the proof of Theorem \ref{prop:value}, it can be shown
	that $\bar\pi$ is $\gamma_0$-attainable. 
	Hence, (\ref{eqn:KG-proof}) implies that $u_S^* \leq \att\conv v_0(\pi_0)$.
	Since $u_S^*$ was arbitrary, it follows that $v^{**}\leq \att\conv v_0(\pi_0)$.
\end{proof}

\begin{proof}[Proof of Proposition \ref{prop:CS}]
	The proof is constructive. 
	Let $\mathscr{I}_K$ be the partition of $[0,1)$ into $K$ intervals
	of the form $E_k = [(k-1)/K,k/K)$, $k=1,\ldots,K$.
	Let $\bar\theta_k = (2k - 1)/2K$ be the midpoint of the $k$-th interval. 
	Let $k(\theta)$ denote the block of $\mathscr{I}_K$ containing $\theta$.
	And let $k^*(\theta) = \argmin\{ (\theta+b-\bar\theta_k)^2 \:|\: k=1,\ldots,K\}$.

	Consider the tuple $(\bm\alpha,\bm\beta,\mu)$ described as follows.
	The audience is randomly divided into a target set $T\subseteq R$ 
	consisting of exactly $n_0$ receivers, and $R\setminus T$.
	When the state equals $\theta$, 
	receivers $r \not\in T$ are sent the message $m_r = k(\theta)$,
	while, receivers $r\in T$ are always sent the message $m_r=k^*(\theta)$.
	In other words, 
	receivers $r \not\in T$ are truthfully told which block of $\mathscr{I}_K$ contains $\theta_0$, 
	while receivers in $T$ are always told that $\theta_0 \in E_{k^*(\theta_0)}$,
	whether this is true or not. 
	The update rules $\bm\beta$ are derived using Bayes' rule.
	Receivers choose $\alpha_r(m_r) = \Exp{\theta_0 | m_r}$.
	
	Fix $K$.
	Since $n_0$ is also fixed, 
	there exists $\bar{n}$ such that whenever $n\geq \bar{n}$
	$\Exp{\theta_0 | m_r = k} \in E_k$,
	and $k^*(\theta) = \argmin \{ (\theta+b-\Exp{\theta_0 | k})^2 \:|\: k=1,\ldots,K\}$.
	For such values of $n$, the proposed tuple is an equilibrium
	and it yields 
	\begin{align}
		u_S^* = - \left( \theta_0 + b - \Exp{\theta_0 | k^*(\theta_0)} \right)^2.
	\end{align}
	If $\theta_0 + b\leq 1$, then  $\theta_0 + b \in E_{k^*(\theta)}$
	and therefore $u_S^* \geq - 1/K$.
	If not, then $k^*(\theta) = K$ and $u_S^* \geq - (b +  1/K)^2$.
\end{proof}

\end{document}